\definecolor{CSblue}{RGB}{13, 113, 242}
\newcommand{\rz}{\right]}
\newcommand{\lz}{\left[}
\newcommand{\rd}{\right\}}
\newcommand{\ld}{\left\{}
\def\({\left(}
\def\){\right)}
\newcommand{\avg}[1]{\langle#1\rangle}
\newcommand{\customref}[2]{\hyperref[#1]{\ref*{#1}#2}}
\newcommand{\nn}{\nonumber\\}
\definecolor{Ured}{HTML}{cc0000}
\definecolor{Ublue}{HTML}{1f65cf}
\definecolor{Ugreen}{HTML}{198a11}
\definecolor{appBlue}{RGB}{68, 114, 196}
\definecolor{appOrange}{RGB}{214, 116, 50}
\definecolor{OliveGreen}{cmyk}{0.84, 0, 0.95, 0.30}
\renewcommand\vec{\bm}
\newcommand{\tmix}{\ensuremath{t_{\rm mix}}}
\newcommand{\bfr}{\mathbf{r}}
\newcommand{\decoder}{\Delta}
\newcommand{\contour}{\gamma}
\newcommand{\contourset}{\Gamma}
\newcommand{\state}{\vec \sigma}
\newtheorem{theorem}{Theorem}
\newtheorem{lemma}[theorem]{Lemma}
\newtheorem{proposition}[theorem]{Proposition}
\newtheorem{result}[theorem]{Result}
\newtheorem{apptheorem}{Theorem}[section]
\newtheorem{appcorollary}{Corollary}[section]
\def\figureautorefname~#1\null{Fig.~#1\null}
\def\equationautorefname~#1\null{Eq.~(#1)\null}
\def\sectionautorefname~#1\null{Sec.~#1\null}
\newcommand{\appref}[1]{\hyperref[#1]{App.~\ref*{#1}}}
\begin{document}

\title{Slow mixing and emergent one-form symmetries in three-dimensional $\mathbb{Z}_2$ gauge theory}

\author{Charles Stahl}
\affiliation{Department of Physics, Stanford University, Stanford, CA 94305}
\author{Benedikt Placke}
\affiliation{Rudolf Peierls Centre for Theoretical Physics, University of Oxford, Oxford OX1 3PU, UK}
\author{Vedika Khemani}
\affiliation{Department of Physics, Stanford University, Stanford, CA 94305}
\author{Yaodong Li}
\affiliation{Department of Physics, Stanford University, Stanford, CA 94305}

\date{January 7, 2026}

\begin{abstract}

Symmetry-breaking order at low temperatures is often accompanied by slow relaxation dynamics, due to diverging free-energy barriers arising from interfaces between different ordered states.
Here, we extend this correspondence to classical topological order, where the ordered states are locally indistinguishable, so there is no notion of interfaces between them.
We study the relaxation dynamics of
the three-dimensional (3D) classical $\mathbb{Z}_2$ lattice gauge theory (LGT) as a canonical example.
We prove a lower bound on the mixing time in the deconfined phase, $t_{\text{mix}} = \exp [\Omega(L)]$, where $L$ is the linear system size.
This bound applies even in the presence of perturbations that explicitly break the one-form symmetry between different long-lived states. This perturbation destroys the \textit{energy barriers} between ordered states, but we show that entropic effects nevertheless lead to diverging \textit{free-energy barriers} at nonzero temperature.
Our proof establishes the LGT as a robust finite-temperature \textit{classical memory}. 
We further prove that entropic effects lead to an \emph{emergent} one-form symmetry, via a notion that we make precise.
We argue that the exponential mixing time follows from \textit{universal} properties of the deconfined phase,
and numerically corroborate this expectation by exploring mixing time scales at the Higgs and confinement transitions out of the deconfined phase. These transitions are found to exhibit markedly different \textit{dynamic} scaling, even though both have the \textit{static} critical exponents of the 3D Ising model.
We expect this novel entropic mechanism for memory and emergent symmetry to also bring insight into self-correcting quantum memories.

\end{abstract}

\maketitle

\section{Introduction}\label{sec:intro}

Many-body systems at low temperatures often exhibit ordering of one sort or another.
When standard Landau theory applies, 
the system is said to \textit{condense} collectively into one of many possible minima of free energy 
with each minimum carrying a distinct value of a \textit{local order parameter},
and we say the system exhibits \textit{spontaneous symmetry breaking} (SSB).
With the thermodynamic stability of the ordered states
often comes 
slow relaxational dynamics.
To go from one ordered state to another, the system usually has to go through stages of coexistence, but interfaces between different states come at a large (free) energy cost,
and hence take a long time to form. The timescale on which the system equilibrates to the equilibrium Gibbs ensemble --- the so-called \emph{mixing time}~\cite{peres2017markov} --- is therefore often a sensitive dynamical diagnostic of ordering (or the lack thereof). 

In this work, we ask if and how the above picture for thermodynamic order and slow relaxation extends beyond the standard Landau paradigm. In particular, much effort in condensed matter physics over the last several decades has been devoted to the study of \emph{topologically ordered} systems which lack a local order parameter~\cite{Wen2007book}. 
The focus has largely been on studying the equilibrium properties of these phases, and it has been shown that ground state (zero temperature) topological order is absolutely stable to small arbitrary perturbations. In contrast, until recently~\cite{ma2025circuitbasedchatacterizationfinitetemperaturequantum}, a 
general understanding of their dynamics is largely unknown except at unperturbed fixed points~\cite{DKLP2001topologicalQmemory, Alicki2010, Castelnovo2008, Temme2016}, and 
the dynamical stability to perturbations is almost entirely unexplored. Indeed, because different ordered states are \emph{locally indistinguishable}, it does not even make sense to talk about free-energy costs corresponding to interfaces between states. The study of slow dynamics and memory in topologically ordered systems is also essential for better understanding passive quantum error correction~\cite{brown2016RMP}, which is expected to require a notion of absolutely stable topological quantum order at finite temperature.

\begin{figure}[t]
    \centering
    \includegraphics{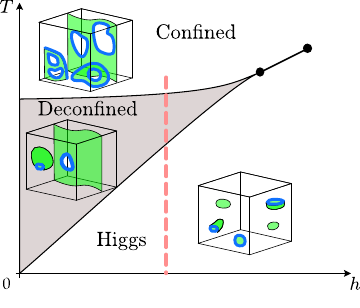}
    \caption{The ``re-entrant'' phase diagram of the 3D $\mathbb{Z}_2$ LGT in the presence of a symmetry breaking field, see \cref{sec:background} for details.
    Along the dashed line with a small but nonzero $h$, the model is in a trivial paramagnetic phase (called the Higgs phase) at $T = 0$, but at some $T > 0$ the deconfined phase reappears.
    We say the one-form symmetry is only emergent in the presence of thermal fluctuations, and this consequently leads to a robust, ``entropic'' classical memory.}
    \label{fig:phase-diagram-T-h}
\end{figure}

In particular, we study the dynamics of Wegner's $\mathbb{Z}_2$ lattice gauge theory (LGT)~\cite{wegner1971, OSTERWALDER1978440, marra1979statistical, fradkinsusskind1978, fradkinshenker1979, kogut1979rmp, trebst2007, tupitsyn2010, vidal2009phase, vidal2011phase} as a canonical example,
which has a \textit{classical topological ordered} (deconfined) phase. 
In modern parlance, the deconfined phase is understood as exhibiting spontaneous breaking of a so-called one-form\footnote{In $d$ dimensions, a $p$-form symmetry acts on deformable $(d-p)$-dimensional submanifolds of space. We refer the reader to \cref{sec:background} for a detailed review of basic notions of this model and its one-form symmetry.} symmetry~\cite{Nussinov2009, Gaiotto2015, lake2018higherform, mcgreevyannrev}.
A distinctive feature of such deformable ``higher-form'' symmetries is that they can \emph{emerge} at large scales~\cite{FRADKIN1990, huse2003coulomb, hermele2004coulomb} even if 
they are explicitly broken at microscopic scales, which is believed to be key to the absolute stability of these phases~\cite{Bravyi_2010, Bravyi_2011, Michalakis_2013}. This is well-illustrated by the phase diagram in \cref{fig:phase-diagram-T-h}, which shows that the deconfined phase survives even in the presence of a perturbation $h$  (the so-called Higgs coupling) which explicitly breaks the one-form symmetry. 
Likewise, the transitions corresponding to higher-form SSB  (and their associated critical properties) can also survive (sufficiently weak) explicit symmetry-breaking perturbations. This is in sharp contrast to conventional Landau SSB of global symmetries in which the ordered phase, and the slow dynamics that comes with it, is generically unstable to arbitrarily small symmetry-breaking perturbations.\footnote{This is encapsulated by the Gibbs phase rule~\cite{gibbs1878equilibrium}.}

The dynamical consequences of a symmetry-breaking field are much less clear. A nonzero $h$ eliminates any apparent \textit{energy barrier}
that grows with system size, which could potentially prevent rapid mixing between different long-lived states. 
Therefore, neglecting the effects of thermal fluctuations (which oftentimes accelerates equilibration), one might naively expect fast relaxation whenever $h > 0$, at all temperatures. Robust memory in this setting would, therefore, necessarily require a novel \emph{entropic} mechanism for slow mixing. Indeed, the ``re-entrant'' shape of the phase diagram in \cref{fig:phase-diagram-T-h} shows that thermal fluctuations are crucial to the deconfined phase, which only emerges at nonzero temperature for $h>0$. This equilibrium phase diagram  is hence an example of ``order-by-disorder''~\cite{chalker2017spin}, and we flesh out the dynamical consequences of this in detail in what follows.

Previous work by Poulin, Melko, and Hastings~\cite{poulin-melko-hastings}  found suggestive numerical evidence of slow mixing within the deconfined phase,
indicating that emergent symmetries, when spontaneously broken, also lead to slow relaxational dynamics, just like in standard Landau theory.
Nevertheless, 
detailed understanding is lacking and consensus is yet to be reached. This is particularly true because it is challenging to extract asymptotic behavior from numerical tests of slow dynamics which are sensitive to the existence of long-lived metastable states. A cautionary tale is the Ising model in a longitudinal field,
in which case the unfavored ground state relaxes to the ``true'' ground state in a time that is independent of system size, but exponentially large in the inverse external field strength. Similar concerns were also raised in Ref.~\cite{poulin-melko-hastings} regarding the $\mathbb{Z}_2$ LGT. In particular, their numerical results for the accessible system sizes and times suggest slow dynamics even beyond the deconfined phase, in the Higgs regime of \cref{fig:phase-diagram-T-h}.

In order to avoid the pitfalls of numerics, we focus on rigorous statements about the dynamics of the model.
Concretely, we study the Glauber dynamics of the $\mathbb{Z}_2$ LGT in 3D, which is a convenient mathematical model for 
time-dependent Ginzburg-Landau equations (TDGL) of ``Model A'' type~\cite{hohenberghalperin1977}.
They describe relaxational processes with non-conserved order parameters subject to local dissipation and/or thermal noise.
Within the mathematical literature, Glauber dynamics is an example of a reversible (i.e. obeying detailed balance) Markov chain, and has been studied extensively for Ising and Potts models, see e.g. Refs.~\cite{martinelli1999saintflour, peres2017markov} and references therein.
As such, they provide a natural setting for studying in detail the problem at hand.
On the other hand, mixing times of the Glauber dynamics for SSB of generalized symmetries, including higher-form symmetries, is an under-explored subject:
we do not know of field-theoretic descriptions of TDGL-type for its relaxation, nor rigorous bounds from the mathematical literature.

Our results may be summarized as follows: 

(1) We provide a proof of slow mixing in the 3D $\mathbb{Z}_2$ LGT, see \cref{thm:Ising_LGT_exp_tmix_x_general} for a formal statement.
In particular, we prove that a \textit{free energy barrier} can nevertheless be present within the deconfined phase, despite the lack of an \textit{energy barrier}
Consequently, the mixing time of its Glauber dynamics diverges exponentially with the system size.
To the best of our knowledge, such an \textit{entropic} mechanism of slow mixing has not been established before.
We expect this mechanism for protecting information might also bring valuable insights into self-correcting \textit{quantum} memories.

Our results thus rigorously establish the $\mathbb{Z}_2$ LGT as the first example of a passive classical memory, robust to symmetry breaking perturbations.

(2) We provide a proof of an emergent one-form symmetry --- induced by thermal fluctuations --- that transforms between topologically-distinct long-lived states, see \cref{lemma:bottle_symmetry} for a formal statement.
As we will show, our notion of emergent one-form symmetry is directly tied to time scales, which can in principle be measured either numerically in Monte-Carlo simulations, or even in experimental gauge-Higgs systems of amphiphilic films~\cite{huseleibler1991, huse1988phase}.
Our result thus provides a concrete dynamical diagnostic of  emergent symmetry.
This is particularly useful as the Wilson loop is no longer useful for diagnosing spontaneous breaking of one-form symmetries when it is explicitly broken by the Higgs term.
While several other diagnostics have been proposed~\cite{FredenhagenMarcu, gregor2011diagnosing,nahum2020selfdual,nahum2024patching}, their relations to directly measurable quantities are not clear to us.

(3) We supplement our proof with large-scale numerics near 
phase transitions out
of the deconfined phase, see \cref{result:different_tmix_scalings}.
In particular, we study both the transition as a function of temperature -- called the ``confinement'' transition -- and the transition as function of the external symmetry-breaking field -- called the ``Higgs'' transition.
We find that the memory times diverge as the transitions are approached from outside the deconfined phase, so that the dynamic phase diagram appear to agree with the thermodynamic one.
Although the Higgs and confinement transitions are both dual to the 3D Ising transition and have the same static critical exponents, their critical mixing times obey markedly different scaling laws.
We observe that while in both transitions the mixing time $\tmix$ is set by the correlation length $\xi$, the dependence is polynomial at the confinement transition ($\tmix = {\rm poly}(\xi)$) but exponential at the Higgs transition ($\tmix = {\rm exp}(\xi)$).
Thus, although we have argued for slow mixing within ordered phases and fast mixing otherwise, no such simple statements can be readily made for critical points, where relaxation dynamics contains information that cannot be completely determined by the static critical properties.
More generally, this distinction shows that different critical dynamics can emerge depending on whether an emergent symmetry is present at criticality.

The rest of this paper is organized as follows.

In \cref{sec:background}, we review the 3D $\mathbb{Z}_2$ LGT, its thermodynamic phase diagram, and the membrane representation of the partition function.
We also review the bottleneck theorem~\cite{Cheeger1971, LawlerSokal1988, JerrumSinclair1989} in \cref{sec:bottleneck} for lower bounding mixing times,  which we use extensively.
The theorem provides a general relation between mixing time scales and the stationary Gibbs distribution, 
and it makes precise intuitions provided by the Arrhenius law.

With these preparations, in \cref{sec:summary_of_results} we provide a formal statement and brief discussions of our results.
By identifying appropriate low-free-energy sectors (``bottles'') separated by high-free-energy barriers (``bottlenecks''),
we prove that within an open subregion of the deconfined phase
the mixing time $t_{\text{mix}}$ of single-spin-flip Glauber dynamics obeys a lower bound of the form 
\begin{align}\label{eq:mixing_time_intro}
    t_{\text{mix}} = e^{\Omega(L)},
\end{align}
where $L$ is the linear size of the system.

In \cref{sec:bottleneck_examples_Peierls}, we illustrate the use the bottleneck theorem for proving slow mixing for the 2D Ising model and for the 3D $\mathbb{Z}_2$ LGT with $h=0$.
Our proof uses tools that are motivated by universal properties of the deconfined phase, and this approach differs from existing ones in the literature.
For both models, we adapt the Peierls argument to establish the existence of bottlenecks between symmetry sectors.
Given the generality of the Peierls argument, this approach establishes \cref{eq:mixing_time_intro} for both models at the same time.

However, this approach does not straightforwardly apply in the presence of an explicit symmetry-breaking perturbation.
In \cref{sec:main_proof}, we show that the relevant symmetry is \emph{effectively} restored in the thermodynamic limit, and in doing so we make precise a notion of emergent one-form symmetry. 
To this end, we use Kramers-Wannier dualities of the 3D $\mathbb{Z}_2$ LGT to the three-dimensional Ising model, and a bound on non-local observables in the latter using cluster expansions.
The radius of convergence of this expansion sets the limits of validity for our rigorous results, which therefore cover only an open subregion of the deconfined phase. Despite this technical caveat, we expect slow mixing and our notion of emergent one-form symmetry
to persist throughout the entire phase, given that the physical inputs to the proof are universal.
We also expect that this line of reasoning --- emergent-symmetry-based Peierls arguments --- can be applied to a broader class of models.

In \cref{sec:critical_mixing_times}, we detail our numerical study of the confinement and Higgs transitions.

In \cref{sec:discussion}, we discuss open questions and possible future directions.

\tableofcontents

\section{Background\label{sec:background}}

\subsection{Three-dimensional Ising gauge theory as a membrane ensemble \label{sec:z2_LGT_intro}}

\begin{figure}[t]
    \centering
    \includegraphics{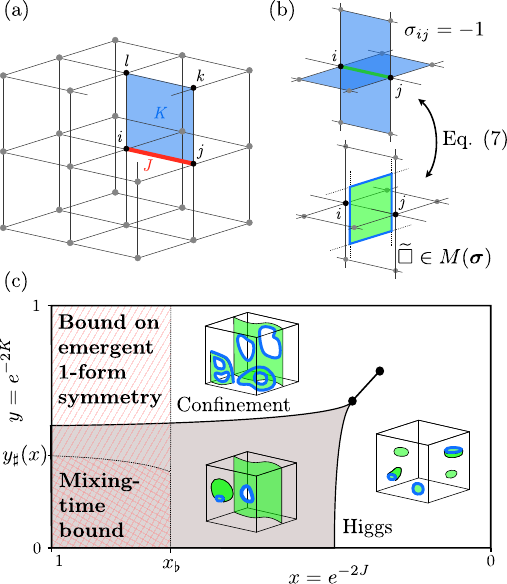}
    \caption{(a) Geometry of the 3D $\mathbb{Z}_2$ lattice gauge theory. (b) Mapping from spins to membrane representation. (c) Phase diagram of the model [\cref{eq:Z_def_gauge_fix}]. We also show a typical configuration for each region of the phase diagram, and indicate ranges of validity of our rigorous results. 
    The illustrations show typical configurations within each phase, as determined by $x$ (the fugacity of membranes) and $y$ (the fugacity of membrane boundaries).
    That is, the membrane area is suppressed by a small $x$, and the length of membrane boundaries is suppressed by a small $y$.
    \label{fig:lattice}}
\end{figure}

We consider the $\mathbb{Z}_2$ LGT defined on the 3D cubic lattice. Its geometry is illustrated in \cref{fig:lattice}(a).
The model includes two types of Ising spins, one type placed on each vertex (which we identify as ``matter'', denoted $\tau_i$), and one type placed on each edge (which we identify as ``gauge field'', denoted $\state_{ij}$). 
Throughout the paper we take periodic boundary condition in all three directions, and we take the linear size of the lattice to be $L$ in all three directions.
We may denote the lattice as $\Lambda = (\mathbb{Z}_L)^3$, where $\mathbb{Z}_L$ is the cyclic group of $L$ elements.
The partition function for this model is
\begin{align} \label{eq:Z_def}
    Z = \sum_{\state, \tau} \exp(K \sum_{\Box_{ijkl}} \state_{ij} \state_{jk} \state_{kl} \state_{li} + J \sum_{\avg{ij}} \state_{ij} \tau_i \tau_j).
\end{align}
Here, $K$ is the stiffness of gauge fluctuations, and $J$ is the strength of gauge-matter coupling (or Higgs coupling).
The model has a manifest gauge invariance,
\begin{align}
    & \tau_i \to \tau_i \chi_i,  \\
    & \state_{ij} \to \state_{ij} \chi_i \chi_j,
\end{align}
where $\chi_i = \pm 1$ can be chosen independently on each site.
This allows us to fix the gauge so that $\tau_i = +1$ for all $i$.
In this gauge, the Higgs coupling $J$ takes the form of a longitudinal magnetic field, acting (only) on the gauge field spins $\state$,
\begin{align} \label{eq:Z_def_gauge_fix}
    Z = \sum_{\state, \tau} \exp(K \sum_{\Box_{ijkl}} \state_{ij} \state_{jk} \state_{kl} \state_{li} + J \sum_{\avg{ij}} \state_{ij}).
\end{align}
We will henceforth work in the gauge $\tau_i = +1$ for all $i$.
Once we have picked a gauge, there is no gauge invariance left.
We can equivalently write the Hamiltonian for the model as 
\begin{equation}
H(\state) = -\sum_{\Box_{ijkl}} \state_{ij} \state_{jk} \state_{kl} \state_{li} - h \sum_{\avg{ij}} \state_{ij}, \label{eqn:LGThamiltonian}
\end{equation}
with $h=J/K$ and $T=1/K$, using the parameters in \cref{fig:phase-diagram-T-h}.

We adopt a convenient representation of the partition function in terms of an ensemble of membranes.
To each bond $\avg{ij}$ we associate its dual plaquette on the dual lattice, and and to each plaquette we associate a dual bond, see \cref{fig:lattice}(b).
For each spin configuration $\state$, we assign a subset of dual plaquettes (they together constitute a membrane, possibly with a boundary) as follows,
\begin{align}
    M(\state) & = \{ \widetilde{\Box} : \widetilde{\Box} \text{ is dual to the primal} \nn
    & \hspace{.5in} \text{bond $\avg{ij}$ and } \state_{ij} = -1\}.
\end{align}
That is, we say a dual plaquette $\widetilde{\Box}$ is occupied by the membrane $M$ if and only if the corresponding gauge field spin is $\sigma_{ij} = -1$.
Such membrane configurations $M$ are in 1-to-1 correspondence with spin configurations $\state$.
Similarly, we define $\partial M$ as the boundary of $M$, which is a subset of edges of the dual cubic lattice, forming (possibly disjoint) closed loops (i.e. ``flux loops''),
\begin{align}
    \partial M(\state) = & \{ \avg{\widetilde{ij}} : \avg{\widetilde{ij}} \text{ is dual to the primal plaquette} \nn 
    & \hspace{.5in} \text{$\Box_{ijkl}$ and } \state_{ij} \state_{jk} \state_{jl} \state_{li} = -1\}.
\end{align}
An example membrane configuration for a single flipped spin $\sigma_{ij} = -1$ is indicated in \cref{fig:lattice}(b) in green, with the boundary indicated in blue.

With these definitions, the partition function in \cref{eq:Z_def} can be rewritten in terms of membrane configurations~\cite{huseleibler1991, nahum2020selfdual}
\begin{align} \label{eq:Z_mem_expansion}
    Z \propto \sum_{M} x^{|M|} \cdot y^{|\partial M|}.
\end{align}
where, matching the Boltzmann weights with \cref{eq:Z_def} we have 
\begin{equation}
x = e^{-2 J}, \quad y = e^{-2 K}.
\end{equation}
\cref{eq:Z_mem_expansion} can be viewed as a ``low temperature'' expansion of \cref{eq:Z_def}. 
The parameters $x$ and $y$ are the \emph{fugacities} of membranes and membrane boundaries, respectively, which control how ``tight'' or ``floppy'' the geometrical objects are. When fugacity is low, typical configurations contain large objects, while when fugacity is large, large objects are penalized.

See \cref{fig:lattice}(c) for the phase diagram of the model, together with illustrations of typical membrane configurations within each phase.
This phase diagram is related to that in \cref{fig:phase-diagram-T-h} (in the temperature-field plane) by identifying $K = 1/T$ and $J = h/T$. 
The deconfined phase is therefore an ensemble of large membranes with very small punctures in them. The membranes are tension-free and are easily able to span across the system.
The line tension of membrane boundaries (i.e. flux lines) vanishes at the confinement transition, so that in the confined phase there are large membranes with large punctures.
At the Higgs transition, the membrane surface tension becomes nonzero, and we have only small membranes with small punctures in the Higgs phase.
These two separate transitions out of the deconfined phase join at a multi-critical point. 
However, there are no thermodynamic phase transitions separating the Higgs and confined phases far away from the deconfined phase~\cite{OSTERWALDER1978440, fradkinshenker1979}.
While the typical membrane configurations deep in the Higgs and confined phases look very different, they are smoothly connected by an intermediate regime featuring configurations of long membrane strips (the ``seaweed'' of Ref.~\cite{gregor2011diagnosing}).

\begin{figure}[h]
    \centering
    \includegraphics{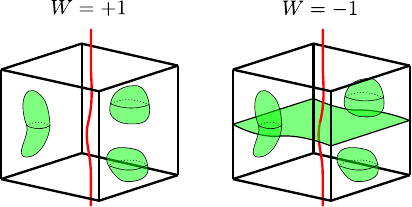}
    \caption{Example of the two homologically inequivalent classes of ground states in our model.
    All membranes illustrated here are boundary-less ($\partial M = \emptyset$). 
    The two classes are distinguished by a homologically nontrivial Wilson loop along the $z$ direction, which takes values $+1$ and $-1$ respectively.
    }
    \label{fig:logical_states}
\end{figure}

When the external field is zero ($h=0$ or equivalently $x = 1$), the energy is exactly invariant under the \emph{symmetry transformation} $\state_{ij} \to \state_{ij}\chi_i\chi_j$.\footnote{This looks like the gauge transformation, but recall that we have already fixed the gauge so that there is no gauge invariance left.}
In the membrane representation, this transformation flips a collection of contractible boundary-less membranes, where a contractible membrane is one that can be smoothly contracted to a point, such as those on the left side of \cref{fig:logical_states}. Contractible membranes are also called homologically trivial. The energy is also invariant under the transformation of flipping an entire noncontractible (or homologically nontrivial) boundary-less
membrane, such as on the right of \cref{fig:logical_states}.
These operations are generators of the so-called one-form symmetries~\cite{mcgreevyannrev}.

The one-form symmetry also comes with a \emph{charged operator}, a Wilson loop along a noncontractible loop. This operator is charged under the one-form symmetry in the sense that it changes sign upon flipping a noncontractible membrane. Furthermore, when all membranes are boundary-less (at $T=0, K=\infty$), the Wilson loop $W$ partitions states into two symmetry sectors $W=+1$ and $W=-1$. However, when boundaries are present, the distinction between sectors and the definition of the symmetry charge become less clear. 

Once $h>0$, the energy depends on the total membrane area, so the model is no longer exactly invariant under the one-form symmetry. We say that the field explicitly breaks the one-form symmetry.\footnote{To deal with these subtleties, we propose a notion of emergent one-form symmetry for the ensemble \cref{eq:Z_mem_expansion} of  tensionful membranes, possibly with boundaries, see the statement of \cref{lemma:bottle_symmetry}.}

While we will mostly focus on the representation in \cref{eq:Z_mem_expansion}, we mention that the model admits a different representation in terms of primary (rather than dual) membranes $\mathcal{M}$,
\begin{align} \label{eq:Z_mem_expansion_HTE}
    Z \propto \sum_{\mathcal{M}} (x')^{|\mathcal{M}|} (y')^{|\partial \mathcal{M}|},
\end{align}
where $x' = \tanh K$ and $y' = \tanh J$.
This can be obtained from a ``high-temperature'' expansion of \cref{eq:Z_def}.
Comparing \cref{eq:Z_mem_expansion} and \cref{eq:Z_mem_expansion_HTE}, we see that the model has a self-duality~\cite{nahum2020selfdual}.

\subsection{Single-spin-flip Glauber dynamics \label{sec:glauber-metropolis}}

We now define the Glauber dynamics of the Ising gauge theory with Metropolis update rules~\cite{enwiki:1320720012}.
This is an example of a Markov chain that is \textit{reversible} with respect to the Gibbs distribution $\mu(\vec\sigma) := Z^{-1} \exp[-\beta H(\vec\sigma)]$, with the energy $H(\state)$ defined in \cref{eqn:LGThamiltonian}; a formal discussion of reversible Markov chains will be given in \cref{sec:bottleneck}.

After initialization, the dynamics proceeds by repeatedly updating a single spin $\state_{ij}$ at a time.  
First, we chose a link $\avg{ij}$ uniformly at random.
Let $\Delta E$ be the energy change when flipping $\state_{ij} \mapsto -\state_{ij}$ while keeping all other spins fixed.
The \emph{Metropolis rule} is then to flip $\sigma_\ell$ with probability
\begin{align} \label{eq:metropolis}
\mathbb{P}(\state_{ij} \mapsto -\state_{ij}) = \min \bigl\{1, e^{- \Delta E}\bigr\}.
\end{align}
Iterating this procedure across the lattice causes the system to
equilibrate to the stationary distribution, at long times.
We adopt the convention that each unit time step contains $\Theta(L^3)$ attempts of spin flips.

The Metropolis rule defines an \textit{ergodic} Markov chain, i.e. it has a \textit{unique} stationary distribution~\cite{enwiki:1320720012}, 
namely the Gibbs distribution $\mu(\state)$.

We remark that there are several different rules other than Metropolis that can be used to define Glauber dynamics.
For example, the \textit{heat bath rule} resamples all spins within a finite-size box, while keeping all spins outside fixed.
While we focus on single-site Glauber (Metropolis) dynamics, we believe our results hold more generally for all local, detailed balance dynamics.

\subsection{Mixing time from bottleneck ratio for reversible Markov chains} \label{sec:bottleneck}

We collect a few basic formal notions and general results for reversible Markov chains.

Consider a Markov chain $(\Omega, P)$ with state space $\Omega$ and transition matrix $P$.
Its matrix elements are non-negative, where $P(\state, \state')$ denotes the transition probability from $\state$ to $\state'$ within one step.
We have, by conservation of probability,
\begin{align}
    \forall \state \in \Omega, \quad \sum_{\state'} P(\state, \state') = 1.
\end{align}

We focus on the case when $(\Omega, P)$ is ergodic so that there is a unique stationary distribution
$\mu$ such that
\begin{align}
    \mu P = \mu.
\end{align}
The Markov chain $(\Omega, P)$ is said to be \emph{reversible with respect to its stationary distribution $\mu$} if it satisfies the \emph{detailed balance condition}
\begin{align} \label{eq:detailed_balance}
    \forall \state, \state' \in \Omega, \quad \mu (\state) P(\state, \state') = \mu (\state') P(\state', \state).
\end{align}
Intuitively, the above condition means that the probability of observing a particular time series of states $\{\sigma_t\}_t$ is identical to the probability of observing the time-reversed series, hence the name.

We define the \emph{mixing time} of the pair $(\Omega, P)$ as
\begin{align}
    t_{\rm mix}(\varepsilon) \coloneqq \inf \{t: \max_{\state \in \Omega} \norm{P^t(\state, \cdot) - \mu}_{\rm TV} \leq \varepsilon\}.
\end{align}
Here $\norm{\cdot}_{\rm TV}$ denotes the total variation norm, and $\varepsilon$ is an abitrary cutoff conventionionally taken to be $1/4$.
The choice of the constant $\epsilon$ is unimportant, since the particular choice simply rescales the value by a constant independent of the chain $t_{\rm mix}(\epsilon) = \Theta(t_{\rm mix}(1/4) \log \epsilon^{-1})$ \cite[Section 4.5]{peres2017markov}.

The physical meaning of $t_{\rm mix}(\varepsilon)$ should also be clear: when the Markov chain is run for time $t > t_{\rm mix}(\varepsilon)$, it generates a distribution that is statistically indistinguishable (up to $\epsilon$) from the stationary one.

\begin{figure}[b]
    \centering
    \includegraphics{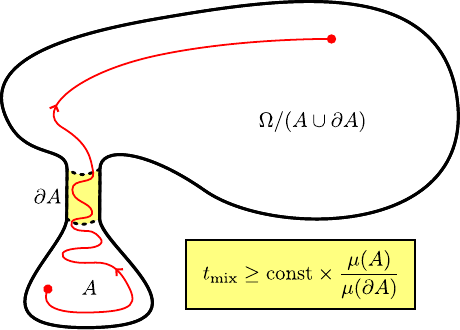}
    \caption{Illustration of the bottleneck theorem [\cref{thm:tmix_bottleneck}].
    }
    \label{fig:bottleneck}
\end{figure}

A particularly physical way to derive lower bounds on the mixing times is via \emph{bottlenecks}. This method is handy since it requires knowledge only about the structure of the steady-state distribution $\mu$.
For a subset of state space $A \subset \Omega$, we define the \emph{bottleneck ratio}
\begin{align}
    \label{eq:def_phi_A}
    & \Phi(A) \coloneqq \frac{\sum_{\state \in A, \state' \in \Omega \setminus A} \mu(\state) P(\state,\state')}{\mu(A)},
\end{align}
as the ratio of the probability flow out of $A$ to the probability weight in $A$.
The summation over $\state'$ may be restricted to $\state' \in \partial A$, defined as follows
\begin{align} \label{eq:def_bottleneck}
    \partial A \coloneqq \{\state' \in \Omega \setminus A: \exists \state \in A \text{ s.t. } P(\state,\state') > 0\}.
\end{align}
For a reversible Markov chain satisfying \cref{eq:detailed_balance}, we can rewrite \cref{eq:def_phi_A} as
\begin{align}
    \label{eq:phi_A_bottleneck}
    \Phi(A) =&\, \frac{\sum_{\state \in A} \sum_{\state' \in \partial A} \mu(\state) P(\state,\state')}{\mu(A)} \nn
    =&\, \frac{\sum_{\state' \in \partial A} \mu(\state') \sum_{\state \in A} P(\state',\state)}{\mu(A)} \nn
    \leq&\, \frac{\sum_{\state' \in \partial A} \mu(\state')}{\mu(A)}
    \equiv \frac{\mu(\partial A)}{\mu(A)}.
\end{align}
For intuitive reasons we will call $\partial A$ a \textit{bottleneck}, and $A$ a \textit{bottle}; see \cref{fig:bottleneck} for an illustration. In a colloquial sense, $\Phi(A)$ measures the probability flow out of $A$, and its inverse captures the lifetime of an initial state within $A$ to remain in $A$.
For the entire state space, we define the bottleneck ratio of $(\Omega, P)$ to be the minimum over all subsets that contain less than half of the total weight
\begin{align}
    \label{eq:def_phi_ast}
    & \Phi_\ast \coloneqq \min_{A \subset \Omega: \mu(A) \leq 1/2} \Phi(A),
\end{align}
which should capture the longest lifetime in the system.

It is intuitive that a small bottleneck ratio should imply slow mixing. This is because a small bottleneck ratio as defined in \cref{eq:def_phi_ast} implies that there are parts of state space which contain less than half the weight, but at the same time have high weight compared to the probability flow into (and by reversibility out of) them. 
Making this intuition precise,
we recall a basic result relating the bottleneck ratio and the mixing time~\cite{Cheeger1971, LawlerSokal1988, JerrumSinclair1989}
\begin{theorem}
\label{thm:tmix_bottleneck}
The mixing time is lower bounded by the inverse bottleneck ratio up to a constant factor,
\begin{align}
    \lz 4 \cdot t_{\rm mix}(1/4) \rz^{-1} \leq  \Phi_\ast \leq \frac{\mu(\partial A)}{\mu(A)}, \quad \forall A \subset \Omega, \mu(A) \leq 1/2.
\end{align}
\end{theorem}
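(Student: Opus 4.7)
The plan is to prove the two inequalities separately. The upper bound $\Phi_\ast \leq \mu(\partial A)/\mu(A)$ for all $A$ with $\mu(A) \leq 1/2$ is essentially a restatement of what was already established: the chain of equalities/inequalities in \eqref{eq:phi_A_bottleneck} shows $\Phi(A) \leq \mu(\partial A)/\mu(A)$ using the detailed balance condition \eqref{eq:detailed_balance}, and the definition \eqref{eq:def_phi_ast} then gives $\Phi_\ast \leq \Phi(A) \leq \mu(\partial A)/\mu(A)$ whenever $\mu(A) \leq 1/2$. So the real content is the lower bound $\Phi_\ast \geq [4\, t_{\rm mix}(1/4)]^{-1}$.

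For the lower bound, I would fix an arbitrary $A \subset \Omega$ with $\mu(A) \leq 1/2$ and exhibit an initial distribution from which the chain is provably far from stationary at times $t \ll 1/\Phi(A)$. The natural choice is the restricted stationary distribution $\mu_A(\state) \coloneqq \mu(\state)/\mu(A)$ for $\state \in A$ and zero otherwise. The crucial step is the escape-probability estimate
\begin{align}
    \mathbb{P}_{\mu_A}(\state_t \in A) \geq 1 - t\,\Phi(A),
\end{align}
which I would prove in one line by using stationarity to write
\begin{align}
    \mathbb{P}_\mu(\state_0 \in A,\, \state_t \notin A)
    \leq \sum_{k=0}^{t-1}\mathbb{P}_\mu(\state_k \in A,\, \state_{k+1} \notin A)
    = t\,\mu(A)\,\Phi(A),
\end{align}
and then dividing by $\mu(A)$ after noting that the law of the chain started from $\mu$ and conditioned on $\state_0 \in A$ is exactly $\mu_A$.

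With this estimate in hand, the rest is bookkeeping. Since $\mu(A) \leq 1/2$, testing the total variation distance on the event $\{\state_t \in A\}$ gives
\begin{align}
    \norm{P^t\mu_A - \mu}_{\rm TV}
    \geq \mathbb{P}_{\mu_A}(\state_t \in A) - \mu(A)
    \geq \tfrac12 - t\,\Phi(A).
\end{align}
On the other hand, convexity of the total variation distance implies $\norm{P^t\mu_A - \mu}_{\rm TV} \leq \max_{\state \in A}\norm{P^t(\state,\cdot) - \mu}_{\rm TV}$, which is at most $1/4$ once $t \geq t_{\rm mix}(1/4)$. Combining the two bounds at $t = t_{\rm mix}(1/4)$ yields $t_{\rm mix}(1/4) \geq [4\,\Phi(A)]^{-1}$, and minimizing over admissible $A$ gives $t_{\rm mix}(1/4) \geq [4\,\Phi_\ast]^{-1}$, which is the claimed lower bound.

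The only mildly subtle point — and the ``main obstacle'' if one can call it that — is being careful in the escape-probability step: one should not try to iterate a one-step Markov-type bound on $\mathbb{P}_{\mu_A}(\state_k \in A)$ directly, since the law of $\state_k$ starting from $\mu_A$ is in general not dominated by $\mu_A$. Running the chain from the unconditioned stationary measure $\mu$ and invoking the union bound \emph{before} conditioning on $\{\state_0 \in A\}$ sidesteps this issue cleanly and is the key trick that makes the proof short.
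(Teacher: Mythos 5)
The paper does not include its own proof of this statement—it cites Section~7.2 of Levin, Peres, and Wilmer \cite{peres2017markov}—so there is nothing in the paper to compare against line by line. Your argument is correct and is essentially the standard proof of the Cheeger/bottleneck lower bound: both halves (the trivial upper bound $\Phi_\ast \le \mu(\partial A)/\mu(A)$ via detailed balance, and the lower bound on $\Phi_\ast$ via the escape-probability estimate from the conditioned stationary start $\mu_A$) are exactly what the cited reference does. Your version of the escape estimate---running the chain from the unconditioned stationary measure $\mu$, union-bounding over the first exit step, and only then dividing by $\mu(A)$---is a clean way to organize the argument that sidesteps the iteration subtlety you correctly flag; the textbook presentation reaches the same bound $\mathbb{P}_{\mu_A}(\state_t \notin A) \le t\,\Phi(A)$ by a per-step accounting, so this is a stylistic rather than substantive difference.
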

\noindent A proof can be found in e.g.~\cite[Section 7.2]{peres2017markov}.

\section{Formal statement of main results \label{sec:summary_of_results}}

1. Our main result lower bounds the mixing time of the $\mathbb{Z}_2$ LGT  within the deconfined phase.
The theorem is stated as follows (we use the standard big-O notation for describing asymptotic behavior of functions)
\begin{theorem}[Slow mixing of $\mathbb{Z}_2$ LGT]
\label{thm:Ising_LGT_exp_tmix_x_general}
For the $\mathbb{Z}_2$ LGT on $\Lambda = (\mathbb{Z}_L)^3$, there exists $x_\flat \in (0, 1)$ such that for all $x \in (x_\flat, 1]$, there exists $y_\sharp(x) \in (0,1)$ 
for which 
\begin{align}
     & x \in (x_\flat, 1], y \in [0, y_\sharp(x)) 
     \quad  \Rightarrow \quad  t_{\rm mix} = \exp( \Omega(L) ).
\end{align}

\end{theorem}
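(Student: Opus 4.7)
The plan is to apply the bottleneck theorem (Theorem~\ref{thm:tmix_bottleneck}) by exhibiting a subset $A \subset \Omega$ with $\mu(A) \le 1/2$ and $\mu(\partial A)/\mu(A) \le e^{-\Omega(L)}$. Intuitively, $A$ should be one of the two long-lived topological sectors that would be exactly exchanged by the one-form symmetry at $x=1$, and $\partial A$ should consist of configurations whose dual membrane nearly wraps around a noncontractible direction, so that a single spin flip either completes a homologically nontrivial 2-cycle or punctures the last unpunctured plaquette of one. The existence of such a free-energy barrier, rather than merely an energy barrier, is the content of the theorem.

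The first step is to define $A$. I would fix a distinguished direction, say $\hat z$, and partition configurations by a $\mathbb{Z}_2$-valued topological label: the parity of a suitably coarse-grained Wilson loop encircling $\hat z$, or equivalently the homology class in the $xy$-direction of the dual membrane $M(\state)$ obtained after stripping off small contractible components. At $x=1$ the exact one-form symmetry gives $\mu(A) = 1/2$. For $x \in (x_\flat, 1)$ the symmetry is only emergent, and here I invoke Lemma~\ref{lemma:bottle_symmetry} to conclude $\mu(A) \le 1/2 + o(1)$. The proof of that lemma uses Kramers--Wannier duality to map the non-local observable labelling the sector to a correlator in a 3D Ising model, and then controls it by a convergent cluster expansion; the radius of convergence is precisely what sets the thresholds $x_\flat$ and $y_\sharp(x)$.

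The second step estimates $\mu(\partial A)$ via a Peierls-style argument on dual membranes. A configuration in $\partial A$ must contain a 2-chain that wraps $\hat z$ with a boundary of length $\Omega(L)$ --- concretely, a thin ribbon of flipped plaquettes threading the torus whose boundary is a pair of homologically nontrivial loops. The Gibbs weight of such a ribbon carries $y^{\Omega(L)}$ from its boundary and $x^{\Omega(L)}$ from its area (with $x \le 1$ this can only help); summing over ribbon locations and geometries contributes at most a polynomial factor in $L$, and a cluster expansion in the bulk handles the remaining local fluctuations without affecting the exponential order. Dividing by $\mu(A) = \Omega(1)$ yields $\mu(\partial A)/\mu(A) \le e^{-\Omega(L)}$, and Theorem~\ref{thm:tmix_bottleneck} concludes the bound.

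The main obstacle is the regime $x < 1$. When the one-form symmetry is only emergent, one cannot pair configurations by a symmetry action to cancel contributions, so both $\mu(A) = \Omega(1)$ and the Peierls-style estimate of $\mu(\partial A)$ hinge on controlling long-range correlations of the perturbed Gibbs measure. The same cluster expansion invoked in defining the topological label must also prevail throughout the Peierls counting, which is what forces the proof into an open subregion of the deconfined phase rather than its full extent: $x_\flat$ is the smallest $x$ for which the expansion converges, and $y_\sharp(x)$ is the largest $y$ compatible with that convergence once the area suppression is accounted for.
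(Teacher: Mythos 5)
Your proposal has the right skeleton (bottleneck theorem, bottles defined by topological sector, Peierls counting for the bottleneck, Lemma~\ref{lemma:bottle_symmetry} plus a Kramers--Wannier duality and cluster expansion to handle $x<1$), and you have correctly located the central difficulty in the absence of an exact $\mathbb{Z}_2$ pairing between sectors. But there are two substantive gaps in the execution.

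First, the sentence ``$x^{\Omega(L)}$ from its area (with $x\le1$ this can only help)'' misidentifies where the danger lies. You cannot simply say that the area suppression helps the Peierls bound. The Gibbs weight of a bottleneck flux-loop configuration $\contourset$ is $y^{|\contourset|}$ times the \emph{conditional} partition function $Z(x,\contourset)=\sum_{M:\partial M=\contourset}x^{|M|}$, which sums over all membranes with that boundary, not just the thin ribbon you describe. The hazard is that for $x<1$ this sum is \emph{not} monotone in $|\contourset|$ in any useful way: adding a long flux loop $\contour$ can raise $Z(x,\contourset\oplus\contour)$ relative to $Z(x,\contourset)$ by an exponential factor, because the enlarged boundary lets in more membrane entropy. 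Controlling that blow-up is a separate ingredient --- a perimeter-law upper bound $Z(x,\contourset\oplus\contour)/Z(x,\contourset)\le e^{C(x)|\contour|}$, which the paper isolates as \cref{lemma:wilson_loop_perimeter_lower_bound} and proves by a second duality to a pure gauge theory plus a GKS inequality and cluster expansion. Your proposal invokes cluster expansion in the bulk but does not name or use this bound, and without it the Peierls series does not close: the constant $C(x)$ is exactly what enters the convergence condition $y\,e^{C(x)}<1/(z-1)$ that defines $y_\sharp(x)$. Your claim that ``summing over ribbon locations and geometries contributes at most a polynomial factor in $L$'' compounds this: the sum over shapes of a flux loop of length $\ell$ is $\sim(z-1)^\ell$, which is exponential, and it is precisely the competition between this entropy, the line tension $y^\ell$, and the perimeter-law factor $e^{C(x)\ell}$ that the argument must control.

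Second, and relatedly, the paper does not estimate $\mu(\partial A)$ and $\mu(A)$ separately. Your step ``dividing by $\mu(A)=\Omega(1)$'' hides the structural move that makes the proof go through: rewriting $\mu(\Omega_1)/\mu(\Omega_W)$ entirely in terms of conditional ensembles $y^{|\contourset|}Z(x,\contourset)$ and $y^{|\contourset|}Z_W(x,\contourset)$, applying the Peierls injection at the level of flux loops, and then bounding the two resulting conditional-ensemble ratios uniformly in $\contourset$ via \cref{lemma:bottle_symmetry} and \cref{lemma:wilson_loop_perimeter_lower_bound}. A direct estimate of $\mu(\partial A)$ alone is hard exactly because you do not a priori know the size of $Z(x,\contourset)$ for long $\contourset$; the ratio reformulation sidesteps this. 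You should also drop the ``thin ribbon with two noncontractible boundary loops'' picture as the definition of the bottleneck: what matters is simply the presence of one long connected flux loop (which need not be homologically nontrivial), and the proof that one cannot jump between bottles without passing through such a state is a separate combinatorial lemma (\cref{prop:Ising_LGT_bottleneck_condition}) that needs to be established about the decoder.
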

\noindent 
\cref{thm:Ising_LGT_exp_tmix_x_general} applies to an open subregion of the deconfined phase, as we show in \cref{fig:lattice}(d).

~

2. An important input to proving \cref{thm:Ising_LGT_exp_tmix_x_general} is a notion of emergent one-form symmetry.
To make this precise, we consider the following \textit{conditional ensemble} for a fixed $\Gamma$ (i.e. a collection of flux loops), by summing over \textit{all} membranes $M$ with $\Gamma$ as its boundary,
\begin{align}\label{eq:Z_with_boundary}
    Z(x, \Gamma) \coloneqq \sum_{M : \partial M = \Gamma} x^{|M|}.
\end{align}
Since we have fixed $\Gamma$, it is unambiguous to talk about the homology classes of each membrane appearing in the conditional ensemble.
We therefore define $Z_\pm(x, \Gamma)$ to be the corresponding weights of the two homology classes, where $Z(x, \Gamma) = Z_+(x, \Gamma) + Z_-(x, \Gamma)$.\footnote{More precisely, given $M_1, M_2$ where $\partial M_1 = \partial M_2 = \Gamma$, it is unambiguous wether they belong to the same homology class or not, e.g. by examine the value of $W$ for the state $M_1 \oplus M_2$. Whether a given state is sorted into $Z_+(x, \Gamma)$ or $Z_-(x, \Gamma)$ depends on the particular decoder (a concrete decoder is detailed in \cref{sec:main_proof}), but we note that our result \cref{lemma:bottle_symmetry} bounds their ratio from both sides, and is uniform in $\Gamma$, therefore is independent of the decoder.}
We show that
\begin{lemma}[Emergent one-form symmetry]
\label{lemma:bottle_symmetry}
Let $x > x_\flat$, where $x_\flat$ is given in \cref{thm:Ising_LGT_exp_tmix_x_general}.
The conditional ensemble [\cref{eq:Z_with_boundary}] has asymptotically equal weight within each bottle.
More precisely, we have
\begin{align}
\label{eq:bottle_symmetry}
\forall \Gamma, \quad
    \abs{\frac{Z_+(x, \contourset)}{Z_-(x, \contourset)} - 1} = \exp(-\Omega(L)).
\end{align}
Note that this result depends only on $x$, but not on $y$; therefore, it holds for a region beyond the deconfined phase.
Note also that this result holds for all $\Gamma$.
\end{lemma}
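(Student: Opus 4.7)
The plan is to reduce the sector ratio to a Wilson-loop expectation and bound that expectation via Kramers–Wannier duality to a dual $3$D Ising model where a convergent high-temperature cluster expansion applies. First I would pick a primal $1$-cycle $\gamma$ transverse to the direction distinguishing the two sectors, and consider $W(\state)=\prod_{\langle ij\rangle\in\gamma}\state_{ij}=(-1)^{|\gamma\cap M|}$. For any $M$ with $\partial M=\contourset$, this sign depends only on the homology class of $M$: it is $+1$ on one sector and $-1$ on the other. Hence, setting $r:=\langle W\rangle_\contourset$ for the conditional expectation,
\begin{align}
\frac{Z_+(x,\contourset)}{Z_-(x,\contourset)}-1=\frac{2r}{1-r},
\end{align}
so it suffices to prove $|r|=\exp(-\Omega(L))$ uniformly in $\contourset$.

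Next I would pass to dual Ising spins $\tau$ on primal $3$-cells. Because $\contourset$ is null-homologous (being a $1$-boundary), one can fix a reference surface $M_\contourset$ with $\partial M_\contourset=\contourset$ and trivial homology, chosen so as not to wrap the cycle threaded by $\gamma$. Writing $M=M_\contourset\oplus M_c$ with closed $M_c$, and using that closed $2$-chains on the dual lattice are generated by Ising domain walls together with $H_2(\mathbb{T}^3;\mathbb{Z}_2)$ classes, the weight $x^{|M|}$ becomes the Boltzmann weight of a $3$D Ising model at inverse temperature $\beta=-\tfrac12\ln x$ with a finite, $\contourset$-dependent sign flip applied to the bonds dual to faces of $M_\contourset$. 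Along $\gamma$, a single-valued (PBC) $\tau$ has an even number of flips whereas an ABC configuration has an odd number, so the two homology sectors of $M$ correspond to PBC and ABC for $\tau$ along the cycle threaded by $\gamma$. Hence
\begin{align}
r=\frac{Z^{\rm f}_{\rm per}-Z^{\rm f}_{\rm anti}}{Z^{\rm f}_{\rm per}+Z^{\rm f}_{\rm anti}},
\end{align}
a ratio of frustrated dual-Ising partition functions.

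The analytic core is a convergent high-temperature (Mayer) cluster expansion for the frustrated Ising model. Writing the $3$D Ising weight in the standard $(1+\tanh\beta\,\tau_c\tau_{c'})$ form yields a polymer gas whose single-bond activities are uniformly bounded in absolute value; the frustration only permutes signs of these activities without altering their magnitudes. The Kotecký–Preiss (or Dobrushin) criterion then yields absolute convergence on an open interval $x\in(x_\flat,1]$ independent of $\contourset$, which we may take as the same threshold $x_\flat$ as in \cref{thm:Ising_LGT_exp_tmix_x_general}. Within this regime, $\log Z^{\rm f}_\bullet$ is a convergent sum over connected polymers, and a polymer contributes identically to the PBC and ABC sums \emph{unless} its support wraps the chosen torus cycle. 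Wrapping polymers have linear extent at least $L$, so the standard exponential decay of activities in cluster size gives $|\log(Z^{\rm f}_{\rm per}/Z^{\rm f}_{\rm anti})|=\exp(-\Omega(L))$ up to polynomial prefactors, hence $|r|=\exp(-\Omega(L))$ uniformly in $\contourset$. The main obstacle I anticipate is the bookkeeping in the second step: one must gauge-fix $M_\contourset$ away from the cycle dual to $\gamma$ so the correspondence between homology of $M$ and boundary condition of $\tau$ is clean, and one must verify that the polymer activity bounds are genuinely uniform in $\contourset$. Both are routine but fiddly; the genuine analytic input is nothing more than convergence of the high-temperature expansion for the $3$D Ising model.
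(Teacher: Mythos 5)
Your proposal is correct and follows the same backbone as the paper's proof: express the sector imbalance as a periodic-versus-antiperiodic comparison in a Kramers--Wannier-dual 3D Ising model (your $r$ is the paper's $\mathcal{O}(x,\contourset)=\avg{\state_{\rm aux}}^{\rm I}_{x,\contourset}$ from \cref{eq:dual_3D_Ising_AFM}), and then control that comparison by a convergent high-temperature cluster expansion in which only polymers of odd crossing parity with the nontrivial surface $P$ distinguish the two boundary conditions, and such polymers necessarily wrap and hence have length at least $L$ (exactly the mechanism of \cref{eq:B19} in \cref{sec:cluster_expansion}). The one genuine difference is how uniformity in $\contourset$ is obtained. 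The paper does not expand the frustrated model at all: it uses GKS inequalities, in particular \cref{eq:GKS_3}, to bound $\lvert\avg{\state_{\rm aux}}^{\rm I}_{x,\contourset}\rvert\le\avg{\state_{\rm aux}}^{\rm I}_{x,\emptyset}$, so the cluster expansion is only ever run for the clean ferromagnet with $\contourset=\emptyset$. You instead run the expansion directly in the $\contourset$-frustrated model and observe that the Koteck\'y--Preiss criterion depends only on activity magnitudes, which the sign flips induced by the reference membrane $M_\contourset$ do not change; this is a valid alternative that dispenses with correlation inequalities at the cost of having to check (as you flag) that the polymer bounds and the homology bookkeeping are uniform in $\contourset$ --- which they are, since the $M_\contourset$ signs enter only as unimodular factors common to both boundary sectors. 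Two small points to tighten: the parity-implies-wrapping step should be phrased at the level of clusters (a cluster with odd total crossing parity contains at least one polymer of odd parity, as in the paper's $\mathbb{1}_{\rm odd}$ bound), and on the 3-torus there are three homology directions, so either introduce three boundary-condition sectors per direction or restrict to the $z$-logical bit as the paper does in its footnote; neither affects the conclusion.
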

\noindent 
Physically, it is precisely the entropic contributions of the ensemble of membranes (conditional on $\partial M = \Gamma$) in \cref{eq:Z_with_boundary} that cancels out the bare surface tension ($x < 1$), and \cref{lemma:bottle_symmetry} may be interpreted as a notion of renormalized surface tension being zero. 

We note that this notion depends only on $x$, but not $y$, and therefore extends out of the deconfined phase into the confined phase. Thus, while the deconfined phase is the SSB phase of the emergent symmetry, by our definition the symmetry is emergent even when it is not spontaneously broken.

Physically, we expect some notion of emergent symmetry in the entire deconfined phase (and a subregion of the confined phase); but \cref{eq:bottle_symmetry} is too strong a condition to ask for all $\Gamma$, regardless of their shape or length.
Some relaxations of \cref{eq:bottle_symmetry} will need to be made for this purpose, particularly in the vicinity of the multicritical point.

~

3. Our third result is on critical mixing times, obtained from numerical memory experiments.
\begin{result}[Critical mixing times]\label{result:different_tmix_scalings}
The mixing time near the confinement transition exhibits a critical slowing down
\begin{align}
    \tmix^{\rm C} = \Omega(\mathrm{poly}(\xi)).
\end{align}
where as near the Higgs transition we have an unconventional, exponential slow down,
\begin{align}
    \tmix^{\rm H} = \exp(\Omega(\xi)).
\end{align}
Here, $\xi$ is the correlation length.
We support our results with scaling arguments and data collapses.
\end{result}
\noindent Recall that the two transitions have the same static critical exponents, and are electro-magnetic duals to each other.
Our result is compatible with this since Glauber dynamics is not invariant under duality. It illustrates that critical relaxation dynamics generically contains nontrivial information about the dynamics which is not determined by the static critical properties.

\section{Warmup examples of bottles and bottlenecks \label{sec:bottleneck_examples_Peierls}}

We discuss examples of Glauber dynamics and lower bounds on their mixing times using \cref{thm:tmix_bottleneck}.
We also show how the 2D Ising model can be described using a membrane ensemble similar to the ensemble for 3D gauge theory, and how the Peierls argument leads to a symmetry-based
bottleneck proof of slow mixing.

\subsection{2D Ising model at large $\beta$}\label{sec:Ising}

The slow mixing of 2D Ising model for all $\beta > \beta_c$ has been studied extensively in the literature, see \cite[ Chapter 15]{peres2017markov} for a survey.
Detailed understanding of the model is often needed to access $\beta \to \beta_c$.
Here, we present an alternative proof based on the Peierls argument and the Ising symmetry.
Our proof only works for sufficiently large $\beta$ and is therefore weaker than existing results.
The purpose of this exercise is to introduce the necessary concepts, which are generalizable for the purpose of proving slow mixing in the gauge theory.

We describe the 2D Ising model as a model of membranes with punctures, like the gauge theory: the all-up state has no membranes and the all-down state has a single unbroken membrane. 
A typical state in the membrane representation is shown in \cref{fig:ising_config}.
It takes a long time to mix the two ordered state because, in going between the two states, a large domain wall (puncture) must be formed.
We will therefore define  \textit{bottlenecks} as states with \textit{long} domain walls, and \textit{bottles} as those with only \textit{short} domain walls.

\begin{figure}[b]
    \centering
    \includegraphics[]{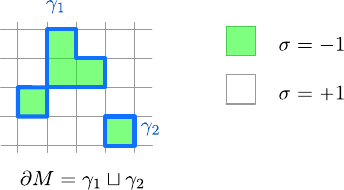}
    \caption{Geometry and notation of the two-dimensional Ising model.}
    \label{fig:ising_config}
\end{figure}

\begin{figure*}
    \centering
    \includegraphics[]{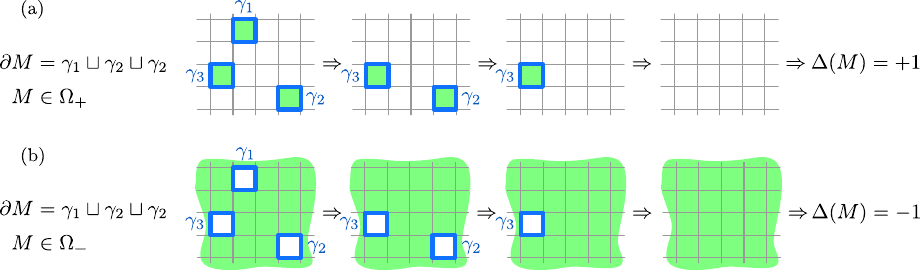}
    \caption{(a) Illustration of the decoder in \cref{eq:Ising_decoder}. (b) Same decoder but acting on state from a different bottle. }
    \label{fig:ising_decoder}
\end{figure*}

Consider a discretization of the 2-torus with lattice $\Lambda = \mathbb{Z}_L \times \mathbb{Z}_L$, where $\mathbb{Z}_L$ is the cyclic group of $L$ elements.
The state space is $\Omega = \{+1, -1\}^\Lambda$, and each $\state \in \Omega$ is a \textit{spin configuration}, with $\state_\bfr = \pm 1$ being the spin value on site $\bfr \in \Lambda$.

We denote the \textit{Hamming weight} of $\state$ as $|\state|$.
For $\state, \state' \in \Omega$, we define the binary operation $\oplus: \Omega \times \Omega \to \Omega$ such that
\begin{align}
    (\state \oplus \state')_\bfr = \state_\bfr \cdot \state'_\bfr.
\end{align}

To make the membrane analogy as explicit as possible, we interpret each site $\bfr$ as a plaquette. For each spin configuration $\state$, we say that a plaquette is occupied by a membrane if and only if $\state_\bfr=-1$.
With the membrane representation in mind, we will henceforth denote each $\state$ with a membrane $M$.
Then the partition function is
\begin{align} 
\label{eq:Ising_partition_function}
    Z &= \sum_{\state} \exp(K \sum_{\avg{ij}} \state_{i} \state_{j}  + J \sum_{i} \state_{i} ) %
    \propto \sum_{M} x^{|M|} \cdot y^{|\partial M|},
\end{align}
with constants defined similarly to before. The Ising model is only a classical memory for $J=0$ ($x=1$), so we focus on that case here.

For each $M \in \Omega$, we denote its corresponding \textit{domain wall} (DW) as $\partial M$; the notation suggests that the DW can also be thought of as a membrane boundary.
An illustration is provided in \cref{fig:ising_config}.
In general, $\partial M$ is a disjoint union of connected contours on the dual lattice,
\begin{align}
    \label{eq:dw_decomposition_connected}
    \partial M = \bigsqcup_{\alpha} \contour_\alpha.
\end{align}
Here, for a finite system size $L$, the index $\alpha$ takes values within a finite index set.
We will refer to a connected contour $\contour$ on the dual lattice as a \textit{connected DW}.\footnote{We do not try to ``resolve'' domain wall crossings: if four DWs share a vertex, they all belong to the same component.}
For each connected DW $\contour$,
we define
\begin{align}
    \delta(\contour) \coloneqq \mathrm{argmin}_{M \in \Omega: \partial M = \contour} |M|.
\end{align}
Whenever there are more than one $M$ with minimal weight (or none, if $\contour$ is not contractible), we say $\delta(\contour)$ is undefined.
We make the following observation
\begin{proposition} \label{prop:Ising_short_DW}
On $\Lambda = \mathbb{Z}_L \times \mathbb{Z}_L$, for each connected DW $\contour$,
\begin{align}
|\contour| \leq \frac{L}{10}
\  &\Rightarrow \  
\text{$\delta(\contour)$ is defined, and }
    |\delta(\contour)| \leq \frac{L^2}{1600},\\
|\contour| \leq \frac{L}{2}
\  &\Rightarrow \  
\text{$\delta(\contour)$ is defined, and }
    |\delta(\contour)| \leq \frac{L^2}{64}.
\end{align}
\end{proposition}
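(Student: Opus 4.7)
The proposition is a lattice Peierls bound: a short connected domain wall must uniquely bound a small region. My plan is to combine two standard ingredients, namely that short connected curves on the torus are null-homologous, and the edge-isoperimetric inequality on $\mathbb{Z}^2$.

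First I observe that, since $\contour$ is a connected subgraph of the dual lattice with $|\contour|$ edges, any two of its vertices are joined by an edge-path of length at most $|\contour|$, so the Euclidean diameter of $\contour$ on the torus is at most $|\contour|\leq L/2$. This is no larger than the injectivity radius of $(\mathbb{Z}_L)^2$, so $\contour$ lies inside a topological disk; equivalently, it lifts isometrically to the universal cover $\mathbb{R}^2$ and is null-homologous. Hence at least one 2-chain $M$ with $\partial M = \contour$ exists, and because $H_2((\mathbb{Z}_L)^2;\mathbb{Z}_2)\cong\mathbb{Z}_2$ there are exactly two, namely the ``small'' membrane $M_\ast$ whose planar lift is bounded by $\contour$ and its torus complement $\Lambda\setminus M_\ast$.

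Second, I invoke the edge-isoperimetric inequality on $\mathbb{Z}^2$, which asserts $|A|\leq |\partial A|^2/16$ for every finite $A\subset\mathbb{Z}^2$. Applied to the planar lift of $M_\ast$, this gives $|M_\ast|\leq |\contour|^2/16$, so $|\contour|\leq L/10$ yields $|M_\ast|\leq L^2/1600$ and $|\contour|\leq L/2$ yields $|M_\ast|\leq L^2/64$. Uniqueness of the minimizer is then immediate: since $|M_\ast|\leq L^2/64 < L^2/2$ while $|\Lambda\setminus M_\ast| = L^2 - |M_\ast| > L^2/2$, the two candidate membranes have strictly different cardinalities, so $\delta(\contour) = M_\ast$ is well defined.

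The step that requires the most care is the handling of T- and X-junctions, i.e.~degree-$4$ dual vertices where four plaquettes meet with alternating spin values. Because of these junctions $\contour$ is not literally a simple closed curve but rather an Eulerian subgraph, and its ``interior'' must be defined via the underlying $\mathbb{Z}_2$ 2-chain rather than Jordan-curve separation. However, once we pass to the planar lift, $\contour$ simply partitions a disk into faces; any 2-chain with boundary $\contour$ is just a union of these faces, viewed as a subset of $\mathbb{Z}^2$, and the edge-isoperimetric inequality applies to this union without modification.
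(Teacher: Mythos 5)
Your proof is correct. The paper states \cref{prop:Ising_short_DW} as an unproved ``observation,'' so there is no argument to compare against; yours supplies exactly the reasoning the constants suggest: a connected even subgraph with $|\contour|\le L/2<L$ cannot wrap the torus and hence bounds precisely two $\mathbb{Z}_2$ $2$-chains $M_\ast$ and $\Lambda\setminus M_\ast$; the discrete edge-isoperimetric inequality $|M_\ast|\le|\contour|^2/16$ (e.g.\ via the bounding-box count $|\partial_e A|\ge 2(w+h)\ge 4\sqrt{|A|}$) gives the stated bounds $L^2/1600$ and $L^2/64$; and since both bounds are below $L^2/2$ the complement is strictly larger, so the minimizer is unique and $\delta(\contour)$ is well defined. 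Your explicit handling of degree-$4$ dual vertices (treating $\contour$ as an even subgraph and its interior as the unique finite $2$-chain in the planar lift, rather than invoking the Jordan curve theorem) is a worthwhile point that the paper glosses over in its footnote about not resolving domain-wall crossings. The only loose phrase is the appeal to the injectivity radius for a set of diameter exactly $L/2$; the cleaner statement is that the coordinate projections of $\contour$ each span at most $L/2<L$ lattice units, so $\contour$ sits inside a contractible box, but this does not affect correctness.
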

\noindent With this Proposition, we define $\ell_\ast \coloneqq L/10$ as the cutoff length for connected DWs.
We say a connected DW $\contour$ is \textit{short} if $|\contour| \leq \ell_\ast$, and \textit{long} if $|\contour| > \ell_\ast$. 
Long domain walls may either be contractible or noncontractible.

With these definitions we can start to construct the bottles and bottlenecks.
For $M \in \Omega$, 
we will use $\contour_\alpha \subset \partial M$ to refer to the connected components of $\partial M$, having in mind that 
$\partial M = \bigsqcup_{\alpha} \contour_\alpha$.
We define the \textit{long (connected) DW count} for $M \in \Omega$ as 
\begin{align}
    N_{\rm LDW}(M) \coloneqq |\{\alpha: \contour_\alpha \subset \partial M, |\contour_\alpha| > \ell_\ast \}|.
\end{align}
We define the subspace of \textit{short DW states},
\begin{align} \label{eq:Ising_bottles}
    \Omega_0 &\coloneqq 
    \{M \in \Omega : N_{\rm LDW}(M) = 0 \}.
\end{align}
For each short DW state $M \in \Omega_0$, we define its image under a decoder $\decoder: \Omega_0 \rightarrow \Omega$
denoted $\decoder M \in \Omega$, as
\begin{align} \label{eq:Ising_decoder}
    \decoder M \coloneqq M \oplus \left(\bigoplus_{\alpha} \delta (\contour_\alpha) \right), \quad \text{where } \partial M = \bigsqcup_{\alpha} \contour_\alpha.
\end{align}
Our decoder is of ``loop-by-loop'' type~\cite{nahum2024patching}, see illustrations in \cref{fig:ising_decoder}.
We have the following property of $\decoder M$,
\begin{proposition} \label{prop:Ising_decoder}
$\forall M \in \Omega_0$, $|\decoder M| = 0$ or $|\decoder M| = |\Lambda|$.
That is, $\decoder M$ is either the ``all up'' state, or the ``all down'' state.
\end{proposition}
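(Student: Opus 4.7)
The plan is to show that $\partial(\Delta M) = \emptyset$, and then conclude by connectedness of the torus. First I would establish that the boundary operator $\partial$, viewed as a map from subsets of $\Lambda$ to subsets of dual edges, is $\mathbb{Z}_2$-linear with respect to symmetric difference: for any $A, B \subset \Lambda$,
\begin{equation}
    \partial(A \oplus B) = \partial A \oplus \partial B.
\end{equation}
This reduces to a single edge-by-edge check, since a bond $\langle ij \rangle$ lies in $\partial X$ iff the parities of its two endpoints' memberships in $X$ disagree, and those parities add mod 2 under symmetric difference of sets.

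Next, since $M \in \Omega_0$ means every connected component $\contour_\alpha \subset \partial M$ satisfies $|\contour_\alpha| \leq \ell_\ast = L/10$, I would invoke \cref{prop:Ising_short_DW} to guarantee that each $\delta(\contour_\alpha)$ is well-defined, with $\partial \delta(\contour_\alpha) = \contour_\alpha$ by construction. Using that the $\contour_\alpha$ are pairwise disjoint as edge sets, so $\bigsqcup_\alpha \contour_\alpha = \bigoplus_\alpha \contour_\alpha$, linearity of $\partial$ then gives
\begin{equation}
    \partial(\decoder M) = \partial M \oplus \bigoplus_\alpha \partial \delta(\contour_\alpha) = \Bigl(\bigoplus_\alpha \contour_\alpha\Bigr) \oplus \Bigl(\bigoplus_\alpha \contour_\alpha\Bigr) = \emptyset.
\end{equation}

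Finally, I would observe that any $M' \subset \Lambda$ with $\partial M' = \emptyset$ must assign the same membership value to the two endpoints of every edge, hence is constant on each edge-connected component of $\Lambda$; since the torus is connected, $M' \in \{\emptyset, \Lambda\}$, forcing $|\decoder M| \in \{0, |\Lambda|\}$ as claimed. The argument is essentially formal once the two structural inputs — $\mathbb{Z}_2$-linearity of $\partial$ and existence of each $\delta(\contour_\alpha)$ — are in place; the only substantive use of the short-DW hypothesis is in invoking \cref{prop:Ising_short_DW} to rule out non-contractible (and hence non-bounding) connected components of $\partial M$, so I do not expect a significant obstacle beyond this step.
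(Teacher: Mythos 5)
Your proof is correct, and it is essentially the intended argument: the paper states \cref{prop:Ising_decoder} without proof, relying implicitly on exactly the two facts you isolate, namely the $\mathbb{Z}_2$-linearity of $\partial$ under $\oplus$ and the existence of $\delta(\contour_\alpha)$ for each short component guaranteed by \cref{prop:Ising_short_DW}, so that $\partial(\decoder M)=\emptyset$ and connectedness of the torus forces the all-up or all-down state. No gaps.
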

\noindent Therefore, with a slight abuse of notation, we view $\decoder$ as a map $\decoder: \Omega_0 \to \{+1, -1\}$, which we call a \textit{decoder}.
With the decoder we define the \textit{bottles} as
\begin{align}
    \label{eq:Ising_plus_minus_bottles_def}
    \Omega_\pm \coloneqq \decoder^{-1}(\pm 1).
\end{align}
By Ising symmetry, we have that 
\begin{align}
    \label{eq:decoder_condition_2}
    \mu(\Omega_+) = \mu(\Omega_-) = (1/2) \cdot \mu(\Omega_0) \leq 1/2.
\end{align}
We also consider the \textit{subspace of states having exactly one long connected DW},
\begin{align} \label{eq:Ising_Omega_1}
    \Omega_1 \coloneqq
    \{M \in \Omega : N_{\rm LDW}(M) = 1 \}.
\end{align}

Our next Proposition states that the two bottles cannot be connected directly under single-spin-flip dynamics without first going through $\Omega_1$.
\begin{proposition}
\label{prop:Ising_bottleneck_condition}
The bottles $\Omega_\pm$ as defined in \cref{eq:Ising_plus_minus_bottles_def} satisfy the following condition under Glauber dynamics $(\Omega, P)$
\begin{align}
    \label{eq:decoder_condition_1}
    \forall M \in \Omega_+, \quad P(M,M') > 0 
    \ \Rightarrow\ 
    M' \in \Omega_+ \cup \Omega_1.
\end{align}
A similar results holds for $M \in \Omega_-$.
\end{proposition}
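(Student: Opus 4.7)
The plan is to verify, for any single-spin-flip neighbor $M' = M \oplus \{\bfr\}$ of $M \in \Omega_+$, that the two ways $M'$ could exit $\Omega_+ \cup \Omega_1$ can both be ruled out: (i) that $N_{\rm LDW}(M') \geq 2$, and (ii) that $M' \in \Omega_0$ but $\decoder M' = -1$. The starting observation is that under single-site Metropolis dynamics, $\partial M$ and $\partial M'$ differ only in the four dual edges bounding the plaquette at $\bfr$; every connected DW component of $\partial M$ not incident to this plaquette appears identically in $\partial M'$.

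For case (i), I would show that at most one connected component of $\partial M'$ can be long. The four affected edges form a 4-cycle on the four corners of the dual plaquette at $\bfr$, so the DW edges incident to this plaquette always lie in at most two connected subgraphs --- and in a single subgraph unless they form an opposite-pair pattern such as $\{N,S\}$ or $\{E,W\}$. A short case analysis on the five qualitatively distinct edge patterns appearing in $\partial M$ and $\partial M'$ shows that the affected components of $\partial M$, having total length at most $2\ell_\ast$, reorganize into components of $\partial M'$ of total length at most $2\ell_\ast + O(1)$; in the only potentially dangerous subcase (the opposite-pair flip $\{N,S\}\to\{E,W\}$), either two short $M$-components merge through the added edges into a single loop, or a single $M$-component of length $\leq \ell_\ast$ splits into two loops whose lengths sum to at most $\ell_\ast$. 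In every scenario at most one long DW appears among the affected components, and since unaffected components carry over from $\partial M$ (hence remain short), we conclude $N_{\rm LDW}(M') \leq 1$, giving $M' \in \Omega_0 \cup \Omega_1$.

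To exclude case (ii), assume $M' \in \Omega_0$ and compare the decoder outputs. Writing $\decoder M = M \oplus A$ with $A = \bigoplus_\alpha \delta(\gamma_\alpha)$ summed over the connected DWs of $M$, and similarly $\decoder M' = M' \oplus A'$, one finds $\decoder M \oplus \decoder M' = \{\bfr\} \oplus (A \oplus A')$. Unaffected DW components cancel in $A \oplus A'$, leaving only contributions from the at most four affected components of $M$ and $M'$ combined. Since $M, M' \in \Omega_0$, each such component has length at most $\ell_\ast = L/10$, and \cref{prop:Ising_short_DW} bounds each interior by $L^2/1600$, so $|A \oplus A'| \leq L^2/400$ and $|\decoder M \oplus \decoder M'| \leq 1 + L^2/400 < L^2$ for all $L \geq 2$. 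But \cref{prop:Ising_decoder} forces $\decoder M, \decoder M' \in \{\text{all-up},\text{all-down}\}$, so the Hamming weight of their symmetric difference lies in $\{0, L^2\}$. The only value consistent with our bound is $0$, yielding $\decoder M' = \decoder M = +1$ and hence $M' \in \Omega_+$.

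The main obstacle is the case analysis in step (i), specifically verifying that the opposite-pair flip does not produce two simultaneously long components. I expect this to follow cleanly by parametrizing the cyclic order in which a closed DW loop visits the four corners of $\bfr$'s plaquette --- a ``crossing'' cyclic order yields a single merged loop after the flip, while a ``non-crossing'' order yields two smaller loops whose total length equals that of the original --- and by separately checking the case where the two opposite DW edges at $\bfr$'s plaquette belong to two distinct $M$-components, which forces a single merged $M'$-loop of length $\leq 2\ell_\ast$.
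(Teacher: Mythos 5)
Your proposal is correct and follows essentially the same two-step strategy as the paper: first show a single spin flip can create at most one long domain wall (so $M' \in \Omega_0 \cup \Omega_1$), then, when $M' \in \Omega_0$, compare $\decoder M$ and $\decoder M'$ via the Hamming weight of $M \oplus \decoder M \oplus M' \oplus \decoder M'$, which is $O(L^2/50)$ by \cref{prop:Ising_short_DW} but would be $L^2 - 1$ if the two states landed in different bottles. Your treatment of the opposite-pair subcase in step (i) is in fact more careful than the paper's one-sentence justification of the same claim, but the underlying argument is the same.
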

\begin{proof}
Let $M \in \Omega_+$, and suppose $P(M, M') > 0$ for some state $M'$.
We first note that $M' \in \Omega_+ \cup \Omega_- \cup \Omega_1 = \Omega_0 \cup \Omega_1$.
This is because a single spin flip can only leave $\Omega_0$ by (1) pushing a single loop over the limit, therefore entering $\Omega_1$, or (2) by combining multiple short loops into a \textit{single} long loop, therefore also entering $\Omega_1$.

Based on this observation, we examine $|M \oplus \decoder M|$ and $|M' \oplus \decoder M'|$, i.e. the change in Hamming weight of $M, M'$ under the decoder.
Recall that by \cref{eq:Ising_decoder} we have
\begin{align}
    \decoder M \oplus M = \bigoplus_{\alpha} \delta (\contour_\alpha), \\
    \decoder M' \oplus M' = \bigoplus_{\alpha} \delta (\contour'_\alpha).
\end{align}
In case (1), $\decoder M \oplus M$ and $\decoder M' \oplus M'$ differ on exactly one connected contour, which we call $\gamma_1$ and $\gamma_1'$, respectively.
We have $|\gamma_1| \leq \ell_\ast = L / 10$ in $M$, by definition; and after the spin flip we have $|\gamma_1'| \leq \ell_\ast + 4 \leq L / 2$ in $M'$.
Thus,
\begin{align}
    & |\decoder M \oplus M \oplus \decoder M' \oplus M'| \leq 
    |\delta (\gamma'_1)| + |\delta (\gamma_1)| \nn
    & \leq \frac{1}{64} \cdot L^2 + \frac{1}{1600}\cdot L^2 \leq \frac{1}{60} \cdot L^2.
\end{align}
In case (2), the combined long DW in $M'$ comes from at most 4 short DWs in $M$, and therefore $|\gamma_1'| \leq 4 \ell_\ast \leq L / 2$, and by \cref{prop:Ising_short_DW} we have $|\delta(\gamma)| \leq L^2/64$. Therefore, by the loop-by-loop nature of the decoder, we have
\begin{align}
    & |\decoder M \oplus M \oplus \decoder M' \oplus M'| \leq  |\delta (\gamma'_1)| + 
    \sum_{\alpha=1}^{4} |\delta (\gamma_\alpha)| \nn
    & \leq \frac{1}{64} \cdot L^2 + 4 \cdot \frac{1}{1600}\cdot L^2 \leq \frac{1}{50} \cdot L^2.
\end{align}

On the other hand, if $M' \in \Omega_-$
\begin{align}
|M \oplus \decoder M \oplus & M' \oplus \decoder M'| \nn
    =\, &
    |(M \oplus M') \oplus (\decoder M \oplus \decoder M')| \nn
    =\, &
    |\Lambda| - 1 = L^2 - 1,
\end{align}
which is a contradiction.
It follows that $M' \in \Omega_+ \cap \Omega_1$.
\end{proof}

Comparing \cref{eq:def_bottleneck}, we have by \cref{prop:Ising_bottleneck_condition} that
\begin{align}
    \label{eq:bottleneck_subset_Omega_1}
    \partial \Omega_\pm \subset \Omega_1 \quad \Rightarrow \quad \mu(\partial \Omega_\pm) \leq \mu(\Omega_1).
\end{align}
As $\mu(\Omega_+) = \mu(\Omega_-) \leq 1/2$ from \cref{eq:decoder_condition_2}, we have upon comparing 
\cref{eq:def_phi_ast,eq:def_phi_A,eq:def_bottleneck,eq:phi_A_bottleneck,eq:bottleneck_subset_Omega_1} that
\begin{align} \label{eq:Ising_bottleneck_ratio_bound}
    \Phi_\ast \leq \Phi(\Omega_\pm) \leq \frac{\mu(\partial \Omega_\pm)}{\mu(\Omega_\pm)} 
    \leq \frac{\mu(\Omega_1)}{\frac{1}{2} \mu(\Omega_0)}.
\end{align}
The last ratio is upper bounded by a Peierls' inequality, as follows.

\begin{proposition}
\label{thm:bottleneck_ratio_Ising_Peierls}
For sufficiently large $\beta$, there exists $c > 0$ and $L_0 \in \mathbb{Z}$, $L_0 < \infty$ such that
\begin{align}
    \forall L > L_0, \quad \frac{1}{2} \Phi_\ast \leq \frac{\mu(\Omega_1)}{\mu(\Omega_0)} \leq e^{-c L}.
\end{align}
\end{proposition}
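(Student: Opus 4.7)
The strategy is a standard Peierls contour argument on the torus, adapted to the definitions of $\Omega_0$ and $\Omega_1$. The first and only nonstandard step is to observe that whenever $M \in \Omega_1$ the unique long connected DW $\gamma$ must be contractible: $\partial M$ is the $\mathbb{Z}_2$ boundary of the set of down-spin sites on the torus and hence null-homologous, so the noncontractible components of $\partial M$ must come in pairs of matching winding class. A single long noncontractible DW would then force a second long component, contradicting $M \in \Omega_1$, so $\gamma$ bounds a region $R \subset \Lambda$.

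For such a contractible $\gamma$, I would invoke the Peierls involution $\phi_\gamma(M) = M \oplus \mathbbm{1}_R$ that flips every spin inside $R$. The $\mathbb{Z}_2$ identity $\partial(M \oplus R) = \partial M \oplus \gamma$ shows that $\phi_\gamma$ removes precisely $\gamma$ from $\partial M$ while leaving every other connected DW unchanged, hence $H(\phi_\gamma M) = H(M) - 2|\gamma|$, and $\phi_\gamma$ injects $\{M : \gamma \subset \partial M\}$ into $\Omega$. This yields the usual bound $\mu(\{M : \gamma \subset \partial M\}) \leq y^{|\gamma|}$ with $y = e^{-2\beta}$. A union bound over contractible contours of length $n > \ell_\ast = L/10$, together with the elementary counting estimate $|\{\text{contours of length }n\}| \leq C L^2 \cdot 3^n$, gives
\begin{equation}
    \mu(\Omega_1) \leq \sum_{n > \ell_\ast} C L^2 (3y)^n \leq \frac{C L^2 (3y)^{\ell_\ast}}{1 - 3y} \leq e^{-cL}
\end{equation}
for $\beta$ large enough that $3y < 1$ and all $L \geq L_0$.

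For the denominator I would establish $\mu(\Omega_0) \geq 1/2$ by bounding its complement $1 - \mu(\Omega_0) = \mu(\{M : \text{some long DW exists}\})$. The contractible-DW contribution is already controlled by the sum above. For the noncontractible contribution I would use the pair structure from the first step: any two noncontractible DWs $\gamma_1, \gamma_2$ in the same winding class bound an annular region whose Peierls flip removes both at once, giving $\mu(\{M : \gamma_1, \gamma_2 \subset \partial M\}) \leq y^{|\gamma_1| + |\gamma_2|} \leq y^{2L}$; summing over such pairs (each contour of length $\geq L$) also yields a bound $\leq e^{-cL}$. Combining the two contributions and dividing produces $\mu(\Omega_1)/\mu(\Omega_0) \leq 2 e^{-cL}$, and absorbing the factor of two into a slightly smaller $c$ completes the plan. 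The main obstacle is the topological step eliminating a single noncontractible long DW in $\Omega_1$; once it is in place, every subsequent ingredient is a direct transcription of the classical low-temperature Peierls argument for 2D Ising to the torus.
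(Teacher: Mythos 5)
Your proposal is correct in spirit but takes a genuinely different route from the paper's. The paper bounds the ratio $\mu(\Omega_1)/\mu(\Omega_0)$ \emph{directly} via the injection $\Omega_1 \hookrightarrow \Omega_0 \times \{\gamma: |\gamma| > \ell_\ast\}$: for $M \in \Omega_1$ with unique long DW $\gamma$, write $\mu(M) = y^{|\gamma|}\mu(M \oplus \delta(\gamma))$ and sum, giving $\mu(\Omega_1) \le \mu(\Omega_0)\sum_{|\gamma|>\ell_\ast} y^{|\gamma|}$ in one step. You instead bound the numerator $\mu(\Omega_1)$ and the denominator $\mu(\Omega_0)$ separately, which forces you to prove $\mu(\Omega_0) \geq 1/2$ as an extra step. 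That extra step is where your plan becomes harder than it needs to be: to control $\mu(\Omega_0^c)$ you must flip away \emph{noncontractible} long DWs, and your ``pairs of matching winding class'' picture is an oversimplification --- a state could have noncontractible components with classes $(1,0)$, $(0,1)$, and $(1,1)$, which must be flipped together as a triple (or larger null-homologous collection), not as a pair. The fix is not hard (any null-homologous subcollection of long noncontractible DWs bounds a $2$-chain that can be flipped, and the union bound still converges since each such DW has length $\geq L$), but the bookkeeping is real. The paper's ratio argument avoids this entirely because $\Omega_1$ by construction has \emph{exactly one} long DW, and --- as you correctly observe as the ``main obstacle,'' and as the paper leaves implicit --- that single long DW is automatically contractible: $\partial M$ is a $\mathbb{Z}_2$-boundary, all other components are short and hence null-homologous (by \cref{prop:Ising_short_DW}), so $\gamma$ must be null-homologous too. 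Making that topological fact explicit is a virtue of your write-up; but confining attention to $\Omega_1$, as the paper does, lets one use it once and never touch the multi-DW topology at all.
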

\begin{proof}
We use an argument similar to Peierls' argument.
Observe that
\begin{align} \label{eq:Ising_Peierls_series}
    \frac{\mu(\Omega_1)}{\mu(\Omega_0)}
    & \leq \sum_{\text{contour $\contour$}} \mathbb{1}_{|\contour| \geq \ell_\ast} y^{|\contour|} \nn
    & \leq (L^2) \sum_{\ell = \ell_\ast}^{\infty} (z-1)^\ell y^\ell,
\end{align}
where $y = e^{-2K}$ is the fugacity of a unit domain wall, see \cref{eq:Ising_partition_function}, and $z = 4$ the coordination number. 
The inequality follows from that every state in $\Omega_1$ can be written as a combination of a state in $\Omega_0$ and a long DW, see \cref{fig:peierls}.
In particular, we have an injection of the following form
\begin{align}
    \Omega_1 \hookrightarrow \Omega_0 \times \{ \gamma : |\gamma| > \ell_\ast \}.
\end{align}

\begin{figure}[h]
    \centering
    \includegraphics[]{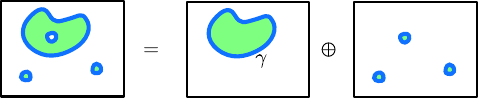}
    \caption{Peierls argument for Eq.~\eqref{eq:Ising_Peierls_series}.}
    \label{fig:peierls}
\end{figure}

The RHS of \cref{eq:Ising_Peierls_series} is a convergent geometric series for $y < y_\sharp$ (where $y_\sharp > 0$), in which case there exists constant $c > 0$, $L_0 < \infty$ so that the RHS is upper bounded by $\exp(-c L)$ for all $L > L_0$.
\end{proof}

It follows from \cref{thm:tmix_bottleneck} and \cref{thm:bottleneck_ratio_Ising_Peierls} that\footnote{We have not tried to make the constants explicit. We abuse the constant symbols $c$ and $C$ with the understanding that the constants here are local to the theorem, and are generally different from elsewhere.}
\begin{theorem}
\label{thm:Ising_exp_tmix}
The Ising model on $\Lambda = \mathbb{Z}_L \times \mathbb{Z}_L$ has an exponentially divergent mixing time at sufficiently low temperature, that is
\begin{align}
    y \in (0, y_\sharp) \  \Rightarrow \  \Phi_\ast \leq \exp(-c L) \  \Rightarrow \  t_{\rm mix} \geq \exp(c L).
\end{align}
\end{theorem}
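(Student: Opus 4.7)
The plan is to observe that \cref{thm:Ising_exp_tmix} follows essentially immediately from \cref{thm:tmix_bottleneck} (the bottleneck lower bound on mixing time) and \cref{thm:bottleneck_ratio_Ising_Peierls} (the Peierls estimate on $\Phi_\ast$), both of which are already in hand. All the nontrivial content has been absorbed into the construction of the bottles $\Omega_\pm$ and the Peierls sum over long contours; what remains is a short chain of inequalities plus some bookkeeping of constants.

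First I invoke \cref{thm:bottleneck_ratio_Ising_Peierls}, which states that for all $y \in (0, y_\sharp)$ and all $L > L_0$ one has $\Phi_\ast \leq 2 e^{-cL}$. Absorbing the prefactor $2$ into a slightly smaller constant $c' < c$ (valid once $L$ is large enough that $e^{(c-c')L} \geq 2$) yields the first implication in the theorem, $\Phi_\ast \leq e^{-c' L}$. Then I apply \cref{thm:tmix_bottleneck}, which asserts $[4 \cdot \tmix(1/4)]^{-1} \leq \Phi_\ast$. Rearranging and substituting the bound just established gives
\begin{align*}
\tmix(1/4) \;\geq\; \frac{1}{4 \Phi_\ast} \;\geq\; \frac{1}{8}\, e^{cL} \;\geq\; e^{c'' L}
\end{align*}
for any $c'' < c$ and $L$ sufficiently large, which is the second implication in the theorem. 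Unpacking the intermediate step via \cref{eq:Ising_bottleneck_ratio_bound,eq:decoder_condition_2} just makes explicit that the bound on $\Phi_\ast$ is really a bound on $\mu(\Omega_1)/\mu(\Omega_0)$ coming from a single bottle $\Omega_\pm$.

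The main ``obstacle,'' such as there is one, has already been handled before the theorem is stated: it was the loop-by-loop decoder of \cref{eq:Ising_decoder} together with \cref{prop:Ising_short_DW,prop:Ising_decoder,prop:Ising_bottleneck_condition}, which ensured that the bottleneck $\partial \Omega_\pm$ sits inside $\Omega_1$, and then the Peierls-style geometric series in \cref{eq:Ising_Peierls_series} that bounded $\mu(\Omega_1)/\mu(\Omega_0)$ by a convergent sum whenever $y$ is below the radius of convergence $y_\sharp$. Once these inputs are granted, \cref{thm:Ising_exp_tmix} is a straightforward corollary, which is why the authors dispatch it in a single sentence. The only genuine choice left to the proof writer is whether to track the constants $c$ and $L_0$ explicitly or to relabel them across the two steps; since the theorem is stated only up to an unspecified constant, the latter is cleaner.
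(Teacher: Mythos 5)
Your proposal is correct and mirrors the paper's own derivation, which presents \cref{thm:Ising_exp_tmix} as an immediate corollary of \cref{thm:tmix_bottleneck} and \cref{thm:bottleneck_ratio_Ising_Peierls}. The constant bookkeeping you carry out (absorbing the factors of $2$ and $4$ into a slightly smaller exponential rate for large $L$) is exactly the content the authors leave implicit.
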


Note that, in the presence of a longitudinal field ($J > 0$ or $x < 1$) we can no longer upper bound the bottleneck ratio as above.
First, the Ising symmetry is explicitly broken, and we have to consider $\Omega_\pm$ separately (rather than $\Omega_0$, as in \cref{eq:Ising_bottleneck_ratio_bound}).
Second, when we consider $\Omega_\pm$ that is anti-aligned with the field direction (the ``wrong bottle''), \cref{thm:bottleneck_ratio_Ising_Peierls} breaks down:
together with the factor $y^{|\contour|}$ in \cref{eq:Ising_Peierls_series}, there can be a nontrivial multiplicative factor $x^{-O(|M(\contour)|)}$ with $x = e^{-2\beta h} < 1$ and $M(\contour)$ an area of spins enclosed by $\contour$, whence the series is no longer convergent.
Physically, this is because the energy gain by flipping a domain of ``wrong'' spins can exceed the boundary energy cost, so that domain walls can no longer protect the wrong bottle to rapidly relax into the one favored by the field.
Indeed, whenever the field is nonzero ($x<1$) the Glauber dynamics mixes rapidly~\cite{martinelli1999saintflour},
\begin{theorem} \label{thm:Ising_rapid_tmix_field}
The Ising model on $\Lambda = \mathbb{Z}_L \times \mathbb{Z}_L$ mixes rapidly in the presence of a symmetry breaking field, at all temperatures,
\begin{align}
    y > 0, x < 1 \quad \Rightarrow \quad \tmix = O(\ln L).
\end{align}
\end{theorem}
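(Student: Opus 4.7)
The plan is to invoke the standard equivalence, due to Martinelli and Olivieri, between static spatial mixing and fast dynamical relaxation, and combine it with the absence of any phase transition for the 2D Ising model whenever $x<1$. The argument proceeds in three steps.

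First, I would establish exponential decay of truncated correlations for the Gibbs measure $\mu$ throughout the region $\{(x,y) : x<1,\ y>0\}$. The key external input is the Lee--Yang circle theorem, which forbids zeros of the partition function from pinching the real field axis as long as $h>0$; consequently the free energy is analytic in $h$ on this region and there is no phase transition at any temperature. Standard cluster expansions (at high $T$) and Pirogov--Sinai contour arguments (at low $T$) then yield a finite correlation length $\xi(x,y)$ uniformly on compact subsets of this open region.

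Second, I would upgrade this weak spatial mixing to \emph{strong} spatial mixing. Here two dimensions plays an essential role: by a theorem of Martinelli, Olivieri, and Schonmann specific to $d=2$, exponential decay of truncated correlations in the infinite-volume measure automatically implies that the influence of arbitrary boundary conditions on local observables decays exponentially with the distance from the boundary, uniformly in the imposed boundary condition. Finally, I would close with the Martinelli--Olivieri equivalence theorem: strong spatial mixing on squares implies a log-Sobolev inequality for single-site Glauber dynamics with a constant bounded below independently of $L$, which in turn yields $\tmix = O(\log L)$ through the usual Diaconis--Saloff-Coste entropy argument.

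The hard part is the first step at low temperature, where the Peierls argument used for \cref{thm:bottleneck_ratio_Ising_Peierls} breaks down because each ``wrong'' domain carries both an area weight $x^{|M|}$ and a boundary weight $y^{|\partial M|}$, and the two compete. The resolution is that once $x<1$ the area suppression makes the sum over wrong domains absolutely convergent for \emph{all} $y$, so that the contour expansion for correlation functions inherits exponential decay in spite of floppy boundaries; this is precisely why rapid mixing holds at every $y>0$, in sharp contrast to \cref{thm:Ising_exp_tmix}. Some care is needed to verify the hypotheses of the two-dimensional completeness theorem uniformly up to the multiphase boundary $x=1$, $y<y_c$, but the estimates remain finite so long as one stays in the interior of the $x<1$ region.
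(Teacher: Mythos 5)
The paper does not actually prove this theorem; it quotes it from the literature (Martinelli's Saint-Flour notes~\cite{martinelli1999saintflour}), and your proposal is precisely a sketch of that literature argument: uniqueness and weak spatial mixing for $h\neq 0$ at all temperatures via Lee--Yang analyticity, the two-dimensional ``weak mixing implies strong mixing'' theorem of Martinelli--Olivieri--Schonmann, and the Martinelli--Olivieri equivalence of strong spatial mixing with a volume-independent log-Sobolev constant yielding $\tmix = O(\log L)$. Your route therefore coincides with the one the paper relies on, and the outline is correct.
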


\subsection{3D $\mathbb{Z}_2$ LGT without matter ($x = 1$)}

We now return to the 3D $\mathbb{Z}_2$ LGT, invoking the membrane representation in \cref{eq:Z_mem_expansion}.
In this section we focus on the case $x = 1$, and the more general case $x \lesssim 1$ is treated in \cref{sec:main_proof}.

With $x=1$, the expansion of $Z$ in \cref{eq:Z_mem_expansion} simplifies to a summation over disjoint collections $\contourset$ of closed ``flux loop'' configurations.
To see this, we rewrite the expansion as
\begin{align} \label{eq:Z_mem_expansion_rewritten}
    Z \propto \sum_{\contourset} y^{|\contourset|}  \cdot \sum_M \mathbb{1}_{\partial M = \contourset} \cdot x^{|M|}.
\end{align}
Indeed, as all membrane configurations are weighted by the same factor regardless of the area  for $x=1$ (i.e. membranes have exactly zero surface tension), the summation over membranes $M$ bounded by $\contourset$ contributes a constant factor, which can be omitted.
This is because the number of membranes with a given boundary does not depend on the boundary.

We define $\Omega_0, \decoder, \Omega_{\pm}, \Omega_1$ similarly as in \cref{eq:Ising_bottles,eq:Ising_decoder,eq:Ising_plus_minus_bottles_def,eq:Ising_Omega_1}, in the present case replacing ``domain walls'' with ``flux loops''.
We provide a few illustrations in \cref{fig:logical_states} to help visualize these notions.
Following the same line of arguments leading to \cref{thm:Ising_exp_tmix}, we have
\begin{theorem}
\label{thm:Ising_LGT_exp_tmix_x=1}
The $\mathbb{Z}_2$ LGT on $\Lambda = (\mathbb{Z}_L)^3$ with $x = 1$ has an exponentially divergent mixing time at sufficiently small $y$, that is
\begin{align}
    y \in (0, y_\sharp) \  \Rightarrow \  \Phi_\ast \leq \exp(-c L) \  \Rightarrow \  t_{\rm mix} \geq \exp(c L).
\end{align}
\end{theorem}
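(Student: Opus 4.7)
The plan is to mimic the structure of the two-dimensional Ising proof in \cref{thm:Ising_exp_tmix}, replacing domain walls with flux loops. At $x=1$ the rewriting in \cref{eq:Z_mem_expansion_rewritten} reduces the partition function to a weighted sum over flux-loop configurations $\Gamma$ alone, $Z \propto \sum_\Gamma N(\Gamma)\, y^{|\Gamma|}$, where $N(\Gamma)$ is the number of membranes $M$ with $\partial M = \Gamma$. I would first verify the easy fact that $N(\Gamma)$ depends only on whether the homology class of $M$ is attainable, and not on the shape of $\Gamma$ — this is because any two $M, M'$ with $\partial M = \partial M' = \Gamma$ differ by a boundaryless membrane, and the set of boundaryless membranes on $(\mathbb{Z}_L)^3$ is a fixed group. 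Hence all of the geometry lives in $\Gamma$, and the model is literally an ensemble of closed flux loops in the dual lattice with fugacity $y$ per unit length.

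Next I would set up bottles and bottlenecks as in \cref{sec:Ising}. I would declare a connected flux loop $\contour \subset \partial M$ to be \emph{short} if $|\contour| \le \ell_\ast \coloneqq L/10$ and \emph{long} otherwise, and prove a 3D analogue of \cref{prop:Ising_short_DW}: a short, contractible loop $\contour$ bounds a membrane $\delta(\contour)$ of area $O(L^2)$ that fits inside a ball of size $O(L)$ (e.g.\ via a minimal-area patch, which is well-defined because $\contour$ is short enough not to wind). Then define $\Omega_0$ to be the states all of whose connected flux loops are short, and let $\decoder \colon \Omega_0 \to \Omega$ act loop-by-loop as in \cref{eq:Ising_decoder}: $\decoder M \coloneqq M \oplus \bigoplus_\alpha \delta(\contour_\alpha)$. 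The output $\decoder M$ has empty boundary, and I can sort it by its homology class. On $(\mathbb{Z}_L)^3$ this is a priori $\mathbb{Z}_2^3$-valued, but it suffices to partition by a single non-contractible Wilson loop $W_z$ to define $\Omega_\pm \coloneqq \decoder^{-1}(\pm 1)$. The $\mathbb{Z}_2^3$ one-form symmetry at $x=1$ guarantees $\mu(\Omega_+) = \mu(\Omega_-) \le 1/2$.

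The core step, analogous to \cref{prop:Ising_bottleneck_condition}, is to show that a single spin flip cannot move from $\Omega_+$ to $\Omega_-$ without passing through the subspace $\Omega_1$ of states with exactly one long connected flux loop. A single flip of $\state_{ij}$ either extends/shortens an existing loop by at most $4$, splits one loop into two, or merges up to four loops that meet at the affected plaquette. Combining short loops cannot create a long loop except by merging several short ones into one, which puts the result into $\Omega_1$; similarly, pushing one loop over the cutoff enters $\Omega_1$. The loop-by-loop nature of $\decoder$ ensures that within $\Omega_0$ the symmetric difference $M \oplus \decoder M$ is contained in $O(L^2)$ cells localized near short loops, so two neighboring states in $\Omega_0$ have decoders differing by a membrane of area $o(L^3)$, which cannot flip the homology-class indicator $W_z$. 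Hence $\partial \Omega_\pm \subset \Omega_1$ and $\Phi_\ast \le 2\mu(\Omega_1)/\mu(\Omega_0)$.

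Finally, I would close the argument with a Peierls-style bound, as in \cref{thm:bottleneck_ratio_Ising_Peierls}. Every state in $\Omega_1$ decomposes uniquely as (a state in $\Omega_0$) $\oplus$ (some reference membrane whose boundary is the long loop), yielding the injection $\Omega_1 \hookrightarrow \Omega_0 \times \{\contour : |\contour| > \ell_\ast\}$. Hence
\begin{equation}
    \frac{\mu(\Omega_1)}{\mu(\Omega_0)} \le \sum_{\contour : |\contour| > \ell_\ast} y^{|\contour|} \le |\Lambda| \sum_{\ell > \ell_\ast} \mathcal{N}(\ell)\, y^\ell,
\end{equation}
where $\mathcal{N}(\ell)$ counts connected closed self-avoiding loops of length $\ell$ through a fixed plaquette. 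A standard lattice-animal bound gives $\mathcal{N}(\ell) \le C \mu^\ell$ for some finite connective constant $\mu$ on the dual cubic lattice, so the series converges for $y < y_\sharp \coloneqq 1/\mu$ and is bounded by $e^{-cL}$ since $\ell_\ast = L/10$. Combining with \cref{thm:tmix_bottleneck} delivers the claim $t_{\rm mix} \ge e^{cL}$. The main obstacle I anticipate is making the loop-by-loop decoder fully unambiguous on short but geometrically complicated loops — in particular, ensuring the minimizer $\delta(\contour)$ is uniquely defined (or at least canonical enough to satisfy the consistency needed in step three), and handling the case where short loops nest or share faces. This is mostly combinatorial, and in 3D it is slightly more delicate than the 2D Ising case because minimal area patches are not always unique, but the loop-by-loop structure together with the size budget $|\delta(\contour)| \ll L^3$ is expected to suffice.
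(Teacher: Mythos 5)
Your proposal follows the same route as the paper: Theorem~\ref{thm:Ising_LGT_exp_tmix_x=1} is proved there by one sentence, ``following the same line of arguments leading to Theorem~\ref{thm:Ising_exp_tmix}, replacing domain walls with flux loops,'' after noting that at $x=1$ the membrane sum in \cref{eq:Z_mem_expansion_rewritten} contributes a $\Gamma$-independent constant. Your write-up fills in those details correctly and at the right level.

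One quantitative claim needs tightening. You argue that two neighboring states $M, M' \in \Omega_0$ have decoders $\Delta M, \Delta M'$ differing by a membrane of area $o(L^3)$, and conclude this ``cannot flip the homology-class indicator $W_z$.'' That inference is not valid as stated: on $(\mathbb{Z}_L)^3$ the minimal-weight nontrivial $2$-cycle has area exactly $L^2$, which is $o(L^3)$, so a bound of $o(L^3)$ does not by itself preclude a flip of $W_z$. The bound you actually need is $|\Delta M \oplus \Delta M'| < L^2$, and this is what your 3D analogue of \cref{prop:Ising_short_DW} is supposed to supply: with the cutoff $\ell_\ast = L/10$, each short loop (or a merger of at most four short loops plus the new edges) bounds a patch of area at most $\sim L^2/64$, so the loop-by-loop symmetric difference between $\Delta M$ and $\Delta M'$ is bounded by a small multiple of $L^2$, comfortably below the logical weight $L^2$. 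This is exactly how \cref{prop:Ising_bottleneck_condition} concludes in the 2D Ising case (where the threshold is $|\Lambda|-1 = L^2 -1$), and it is the step you should make precise here rather than relying on a weak $o(L^3)$ estimate. Your concern about non-uniqueness of the minimal-area patch in 3D is legitimate but is a shared feature of the paper's own definition of $\delta(\gamma)$; any canonical tie-breaking rule works, since the argument only ever uses the area bound $|\delta(\gamma)| = O(\ell_\ast^2)$ and the loop-by-loop structure, not which minimizer is chosen.
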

\noindent We note that this is a special case of \cref{thm:Ising_LGT_exp_tmix_x_general}.

\section{Mixing time at weak Higgs coupling ($x \in (x_\flat, 1]$) \label{sec:main_proof}}

In this section we prove \cref{thm:Ising_LGT_exp_tmix_x_general}.
That is, for all $x \in (x_\flat, 1]$ with some $x_\flat > 0$, we prove an exponential mixing time by applying a similar Peierls-type proof as in \cref{sec:bottleneck_examples_Peierls}.

We will adopt a proof strategy similar to that of \cref{thm:Ising_exp_tmix,thm:Ising_LGT_exp_tmix_x=1}, by identifying the bottleneck states as those with (exactly one) \textit{long} flux loop.
However, we note that our previous analyses make use of an Ising symmetry in crucial ways, but which is explicitly broken when $x < 1$, as we now discuss.
\begin{enumerate}[(i)]
\item
First, as a technical point, our formulation of the Peierls inequality follows most naturally from an injection from $\Omega_1 \to \Omega_0 \times \{\contour: |\contour| \geq \ell_\ast \}$ (see proof of~\cref{thm:bottleneck_ratio_Ising_Peierls}) 
without specifying whether the image is in either one bottle $\Omega_\pm$.
Such a specification necessarily requires one to consider details of the decoder $\decoder$ and presumably an injection $\partial \Omega_\pm \to \Omega_\pm \times \{\contour: |\contour| \geq \ell_\ast \}$, which can be considerably more involved.
Instead, we can directly relate $\mu(\Omega_1)/\mu(\Omega_0)$ to $\mu(\partial \Omega_\pm)/\mu(\Omega_\pm)$ -- as required by the definition of a bottleneck -- by Ising symmetry, see \cref{eq:decoder_condition_2,eq:Ising_bottleneck_ratio_bound}).
Such a symmetry between $\Omega_\pm$ is also present for the $\mathbb{Z}_2$ LGT when $x = 1$ (used implicitly in \cref{thm:Ising_LGT_exp_tmix_x=1}), but is explicitly broken whenever $x < 1$.
\item
More physically, consider the ways in which such a symmetry is broken for $x < 1$.
A first observation is that states in $\Omega_+$ appear to be favored by $x < 1$, as they naively have a smaller surface area, compare \cref{fig:logical_states}.
The situation is similar to the 2D Ising model in a longitudinal field (see \cref{thm:Ising_rapid_tmix_field}) which breaks the Ising symmetry and favors $\Omega_+$ over $\Omega_-$.
Consequently, it immediately leads to rapid mixing (therefore no small bottleneck ratio can be found).
\item
Therefore, even if the technical point in (i) can be avoided and the ratio between $\mu(\Omega_{0,1})$ can be useful for upper bounding $\Phi_\ast$, a Peierls series (similar to
\cref{eq:Ising_Peierls_series}) is still manifestly divergent due to factors such as $x^{-|M(\contour)|}$ where $\partial M = \contour$, compare discussions preceeding \cref{thm:Ising_rapid_tmix_field}.
\end{enumerate}
Based on these observations, it would appear that a Peierls-type proof cannot be found whenever $x < 1$.

To make progress, consider the thermodynamic phase diagram, \cref{fig:lattice}(d).
Unlike the 2D Ising model where a longitudinal field $x < 1$ is \textit{relevant}, in the 3D $\mathbb{Z}_2$ LGT a nonzero field $x < 1$ is \textit{irrelevant} in the deconfined phase.
Indeed, in 3D, the surface tension receives both \textit{energetic} and \textit{entropic} contributions, and this is the crucial difference between 2D and 3D~\cite{poulin-melko-hastings}.
Whenever there is a nonzero \textit{bare} surface tension for $x \lesssim 1$, one expects that it is \textit{renormalized} to zero at long length scales by \textit{entropy}, giving $x_{\rm eff} = 1$.
With this picture, the issues we raised in the previous paragraphs are resolved.
That is, the symmetry between the two bottles $\Omega_\pm$ should be restored at long length scales, and there should be a convergent Peierls series with $x_{\rm eff} = 1$.
A proof similar (in spirit) to that of \cref{thm:Ising_LGT_exp_tmix_x=1} should follow, where the long mixing time is guaranteed by a nonzero \textit{line tension} of flux loops.

In the following, we make these words precise.
With our proof we hope to convey that (1) the zero renormalized surface tension leads to a precise notion of a restored one-form symmetry between $\Omega_\pm$ (explicitly broken by $x \lesssim 1$),
and that (2) the slow mixing time is a consequence of the zero (renormalized) surface tension, which is a universal property of the deconfined phase.

We define $\Omega_{0,1}$ similarly as before,
\begin{align} \label{eq:Ising_LGT_Omega_01}
    \Omega_{m} \coloneqq
    \{M \in \Omega : N_{\rm LFL}(M) = m \}, 
    \quad m \in \{0, 1\}.
\end{align}
where $N_{\rm LFL}$ is the \textit{long (connected) flux loop (FL) count}
\begin{align}
    N_{\rm LFL}(M) \coloneqq |\{\alpha: \contour_\alpha \subset \partial M, |\contour_\alpha| \geq \ell_\ast \}|.
\end{align}
We also define 
\begin{align}
\label{eq:decoder_LGT}
    & \delta(\contour) \coloneqq \mathrm{argmin}_{M \in \Omega: \partial M = \contour} |M|, \\
    & \decoder M \coloneqq M \oplus \(\bigoplus_{\alpha} \delta (\contour_\alpha) \), \quad \text{ where } \partial M = \bigsqcup_{\alpha} \contour_\alpha
\end{align}
where $\decoder$ is a well-defined \textit{decoder} that brings any $M \in \Omega_0$ to a no-flux-loop state, as illustrated in~\cref{fig:LGT_decoder}.
The no-flux-loop states fall into two homology classes, as distinguished by the value of the Wilson loop operator in the $z$ direction.
This defines the \textit{decoder} $\decoder: \Omega_0 \to \{+1, -1\}$.\footnote{With periodic boundary conditions in all three directions on $(\mathbb{Z}_L)^3$, one can in principle encode three logical bits, using the value of the Wilson loop in the $x, y$, and $z$ directions.
For simplicity, throughout the section we will focus only on the logical bit encoded by the Wilson loop in the $z$ direction, with the understanding that our analysis generalizes straightforwardly to the case of three logical bits.}
The decoder defines the bottles
\begin{align}
    \label{eq:Ising_LGT_plus_minus_bottles_def}
    \Omega_\pm \coloneqq \decoder^{-1}(\pm 1).
\end{align}
We also define
\begin{align}
    \Omega_{W} \coloneqq \mathrm{argmin}_{\Omega_W \in \{\Omega_+, \Omega_-\}}\, \mu(\Omega_W).
\end{align}
By definition, 
\begin{align} \label{eq:mu_Omega_min_below_half}
    \mu(\Omega_{W}) \leq 1/2 \cdot \mu(\Omega_0) \leq 1/2.
\end{align}
We have, similarly to~\cref{prop:Ising_bottleneck_condition},
\begin{proposition}\label{prop:Ising_LGT_bottleneck_condition}
The bottles $\Omega_\pm$ as defined in \cref{eq:Ising_LGT_plus_minus_bottles_def} satisfy the following condition under Glauber dynamics $(\Omega, P)$
\begin{align}
    \label{eq:decoder_condition_1}
    \forall \state \in \Omega_+, \quad P(\state,\state') > 0 
    \ \Rightarrow\ 
    \state' \in \Omega_+ \cup \Omega_1.
\end{align}
A similar results holds for $\state \in \Omega_-$.
Comparing \cref{eq:def_bottleneck}, it follows that
\begin{align}
    \partial \Omega_\pm \subset \Omega_1 \quad \text{and} \quad \mu(\partial \Omega_\pm) \leq \mu(\Omega_1).
\end{align}
\end{proposition}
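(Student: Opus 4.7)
The plan is to mimic the proof of \cref{prop:Ising_bottleneck_condition}, replacing ``domain walls'' with ``flux loops'' and replacing the role of the global Ising flip with the homologically nontrivial Wilson-loop distinction between $\Omega_+$ and $\Omega_-$. Let $M \in \Omega_+$ and suppose $P(M, M') > 0$, so $M$ and $M'$ differ by a single spin flip, i.e.\ $M \oplus M'$ is a single dual plaquette. I would first observe that $M' \in \Omega_0 \cup \Omega_1$: flipping one plaquette only toggles the four dual edges around that plaquette, so either (1) it pushes a single affected short loop past length $\ell_\ast$ (giving $M' \in \Omega_1$), (2) it fuses up to four short loops meeting at that plaquette into a single long loop (also giving $M' \in \Omega_1$), or (3) it rewires short loops into other short loops (leaving $M' \in \Omega_0$). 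The task is then to rule out $M' \in \Omega_-$.

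Assume for contradiction that $M' \in \Omega_-$. Then $\decoder M$ and $\decoder M'$ are both boundary-less membranes lying in \emph{different} homology classes, so $\decoder M \oplus \decoder M'$ is itself a boundary-less membrane in the nontrivial class and must contain at least one noncontractible sheet. Hence $|\decoder M \oplus \decoder M'| \geq L^2$, the minimum cross-sectional area of the 3-torus. Combined with $|M \oplus M'| = 1$, the triangle inequality gives
\begin{align}
\left| M \oplus \decoder M \oplus M' \oplus \decoder M' \right| \geq L^2 - 1.
\end{align}
On the other hand, by the loop-by-loop structure of the decoder \cref{eq:decoder_LGT}, the contributions of every flux loop unaffected by the plaquette flip cancel pairwise between $M \oplus \decoder M$ and $M' \oplus \decoder M'$; only the (at most four) loops of $M$ incident to the flipped plaquette, together with the corresponding rewired loops of $M'$, contribute. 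Each such loop has length $O(\ell_\ast) = O(L)$ and is contractible, so — given a 3D analog of \cref{prop:Ising_short_DW} — each admits a spanning membrane of area $O(L^2/c)$ for a tunable constant $c$. Summing over the bounded number of affected loops yields an upper bound strictly below $L^2 - 1$, producing the contradiction and forcing $M' \in \Omega_+ \cup \Omega_1$. The inclusion $\partial \Omega_\pm \subset \Omega_1$ and the resulting bound $\mu(\partial \Omega_\pm) \leq \mu(\Omega_1)$ then follow immediately from \cref{eq:def_bottleneck}.

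The main technical obstacle I anticipate is establishing the quantitative bound $|\delta(\contour)| \leq C |\contour|^2$ for short contractible flux loops $\contour$ on the 3D dual lattice, with $C$ small enough that the cutoff $\ell_\ast$ can be chosen so the upper bound above is strictly less than $L^2 - 1$ uniformly in $L$. Morally this is a lattice isoperimetric inequality: projecting $\contour$ onto a coordinate plane and filling in the enclosed planar region supplies a candidate spanning surface of area $O(|\contour|^2)$, but some care is required to handle degenerate or folded projections and — in case (2) — to control the combinatorics when several loops simultaneously rewire through the same plaquette. Once such a bound is fixed, every other step is a direct transcription of the 2D Ising argument and introduces no new ideas.
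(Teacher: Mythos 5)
Your proposal is correct and follows exactly the route the paper intends: the paper gives no separate proof of this proposition, deferring entirely to the analogy with \cref{prop:Ising_bottleneck_condition}, and your argument is the faithful 3D transcription of that proof (single flipped dual plaquette affects a bounded number of flux loops, homologically distinct boundaryless membranes differ by area at least $L^2$, loop-by-loop cancellation plus an isoperimetric bound on $|\delta(\contour)|$ yields the contradiction). The one ingredient you flag as a gap — the 3D analog of \cref{prop:Ising_short_DW} with suitable constants — is indeed needed and is also left implicit in the paper, and your projection-and-fill sketch is the standard way to supply it.
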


\begin{figure}
    \centering
    \includegraphics[]{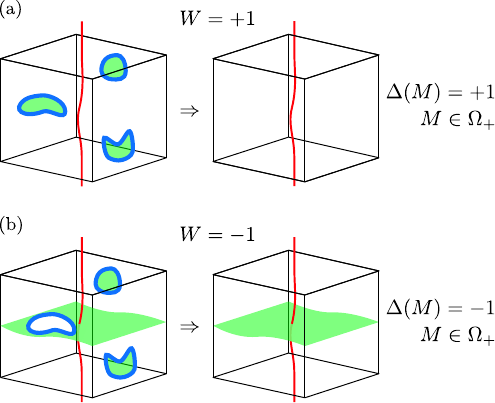}
    \caption{Illustration of the LGT decoder defined in \cref{eq:decoder_LGT}}
    \label{fig:LGT_decoder}
\end{figure}

\subsection{Proof of \cref{thm:Ising_LGT_exp_tmix_x_general}}

With these preparations, we detail below
\begin{proof}[Proof of~\cref{thm:Ising_LGT_exp_tmix_x_general}]

By \cref{prop:Ising_LGT_bottleneck_condition}, \cref{eq:mu_Omega_min_below_half}, and \cref{eq:def_phi_ast,eq:def_phi_A,eq:def_bottleneck,eq:phi_A_bottleneck}, 
\begin{align}
    \Phi_\ast \leq \frac{\mu(\partial \Omega_{W})}{\mu(\Omega_{W})} \leq \frac{\mu(\Omega_1)}{\mu(\Omega_{W})} \eqqcolon \Phi_W.
\end{align}
We rewrite
\begin{align}
    & \Phi_W
    =
    \frac{\sum_{\contourset} \mathbb{1}_{N_{\rm LFL}(\contourset) = 1} \sum_{M} \mathbb{1}_{\partial M = \contourset} \, x^{|M|} y^{|\contourset|}}
    {\sum_{\contourset} \mathbb{1}_{N_{\rm LFL}(\contourset) = 0} \sum_{M} \mathbb{1}_{\partial M = \contourset} \cdot \mathbb{1}_{\decoder(M) = W} \, x^{|M|} y^{|\contourset|}}.
\end{align}
We define \textit{conditional ensembles} as follows,
\begin{align}
    \label{eq:def_conditional_ensemble}
    & Z(x, \contourset) \coloneqq \sum_M \mathbb{1}_{\partial M = \contourset} \cdot x^{|M|},\\
    \label{eq:def_conditional_ensemble_2}
    & Z_{\pm}(x, \contourset) \coloneqq \sum_M \mathbb{1}_{\partial M = \contourset} \cdot \mathbb{1}_{\decoder(M)=\pm} \cdot x^{|M|}.
\end{align}
With these, we can write
\begin{align}
    & \Phi_W
    =
    \frac{\sum_{\contourset} \mathbb{1}_{N_{\rm LFL}(\contourset) = 1} \cdot y^{|\contourset|} \cdot Z(x,\contourset)}
    {\sum_{\contourset} \mathbb{1}_{N_{\rm LFL}(\contourset) = 0} \cdot y^{|\contourset|} \cdot Z_W(x, \contourset)}.
\end{align}
Using the injectivity arguments from Peierls (see proof of~\cref{thm:bottleneck_ratio_Ising_Peierls}), we have
\begin{align}
    & \sum_{\contourset} \mathbb{1}_{N_{\rm LFL}(\contourset) = 1} \cdot y^{|\contourset|} \cdot Z(x, \contourset) \nn
    \leq & 
    \sum_{\text{contour $\contour$}} \mathbb{1}_{|\contour| \geq \ell_\ast} 
    \cdot
    y^{|\contour|} 
    \sum_{\contourset} \mathbb{1}_{N_{\rm LFL}(\contourset) = 0}
    \cdot
    y^{|\contourset|}
    \cdot
    Z(x, \contourset \oplus \contour).
\end{align}
Thus,
\begin{align}
    \label{eq:phi_W_after_peierls}
    & \Phi_W 
    \leq 
    \sum_{\text{contour $\contour$}} \mathbb{1}_{|\contour| \geq \ell_\ast} \cdot y^{|\contour|} \cdot \nn
    &\hspace{.5in} 
    \(
    \frac{\sum_{\contourset} \mathbb{1}_{N_{\rm LFL}(\contourset) = 0}\cdot y^{|\contourset|} \cdot Z(x, \contourset \oplus \contour)}
    {\sum_{\contourset} \mathbb{1}_{N_{\rm LFL}(\contourset) = 0}\cdot y^{|\contourset|} \cdot Z_W(x, \contourset)}
    \).
\end{align}
By \cref{lemma:bottle_symmetry} and \cref{lemma:wilson_loop_perimeter_lower_bound} (stated and proven below), there exists $x_\flat > 0$ such that for sufficiently large $L$
\begin{align}
    & \forall x \in (x_\flat, 1], \exists C(x) < \infty \text{ s.t. } \forall \contourset, \contour, \nn
    & \hspace{.1in}
    (0.499) \cdot
    \frac{Z(x, \contourset \oplus \contour)}{Z_W(x, \contourset)}
    \leq 
    \frac{Z(x, \contourset \oplus \contour)}{Z(x, \contourset)}
    \leq \exp{C(x) |\contour|}.
\end{align}
Here, we use \cref{lemma:bottle_symmetry} for the first inequality, and \cref{lemma:wilson_loop_perimeter_lower_bound} for the second.
This puts an uniform upper bound on the term in parentheses of \cref{eq:phi_W_after_peierls}.
Therefore,
\begin{align}
    \Phi_W \leq (\mathrm{const}) \sum_{\text{contour $\contour$}} \mathbb{1}_{|\contour| \geq \ell_\ast} 
    \left( y \cdot e^{C(x)} \right) ^{|\contour|}.
\end{align}
Summing the geometric series we have that $\forall y \in [0, y_\sharp(x))$ where $y_\sharp(x)$ depends on $C(x)$ and the coordination number of $\Lambda$, $\Phi_W \leq \exp{-c'(x,y) \cdot  \ell_\ast}$ where $c'(x,y) > 0$ for all $x \in (x_\flat, 1], y \in [0, y_\sharp(x))$.
\end{proof}

\subsection{Proof of~\cref{lemma:bottle_symmetry}}

\begin{proof}[Proof of \cref{lemma:bottle_symmetry}]
Consider the relative difference of the two conditional bottles,
\begin{align}
    \mathcal{O}(x, \contourset) \coloneqq \frac{Z_+(x, \contourset) - Z_-(x, \contourset)}{Z_+(x, \contourset) + Z_-(x, \contourset)}.
\end{align}
Here it is possible that $\contourset = \emptyset$;
recall that the decoder $\decoder$ acts trivially when the input $M$ has $\partial M = \emptyset$.
Note that
\begin{align}
    \frac{Z_W(x, \contourset)}{Z(x, \contourset)}  
    \geq
    \frac{\mathrm{min}\{Z_\pm(x, \contourset)\}}{Z(x, \contourset)}  
    = \frac{1}{2}(1 - |\mathcal{O}(x,\contourset)|).
\end{align}
So, an upper bound on $|\mathcal{O}(x,\contourset)|$ will be sufficient for our purpose.
Below, we will use a dual representation for this upper bound.

For fixed $x, \contourset$, we define a 3D Ising model with spins on the vertices of $\Lambda = (\mathbb{Z}_L)^3$.
Compared to the usual ferromagnetic Ising model, 
we take a quenched disorder realization where on certain bonds (defined by a membrane $M$ bounded by $\contourset$) the Ising coupling is taken to be antiferromagnetic.
We also introduce an additional, auxiliary Ising spin, denoted $\state_{\rm aux}$.
The partition function reads
\begin{align} \label{eq:dual_3D_Ising_AFM}
    & \ \Xi^{\rm I}(x, \contourset) \nn
    & = \tr_{\state, \state_{\rm aux}} \exp \ld \beta \sum_{\avg{ij}} (-1)^{\mathbb{1}_{\avg{ij} \in M}} \cdot \(\state_{\rm aux}\)^{\mathbb{1}_{\avg{ij} \in P}} \cdot \state_i \state_j \rd,
\end{align}
where $e^{-2\beta} = x$, $M$ is any membrane such that $\partial M = \contourset$, and in addition $P$ is a homologically nontrivial membrane.
Therefore, $\sigma_{\rm aux} = +1$ enforces \textit{periodic} boundary conditions in the dual Ising model, and $\sigma_{\rm aux} = -1$ enforces \textit{antiperiodic} boundary conditions.
An illustration of the model is shown in \cref{fig:dual-3D-Ising}.
When $\contourset = \emptyset$,  we take $M = \emptyset$, and $\Xi^{\rm I}(x, \contourset)$ simplifies to
\begin{align} \label{eq:dual_3D_Ising}
    \Xi^{\rm I}(x, \emptyset) = \tr_{\state, \state_{\rm aux}} \exp \ld \beta \sum_{\avg{ij}} \(\state_{\rm aux}\)^{\mathbb{1}_{\avg{ij} \in P}} \cdot \state_i \state_j \rd.
\end{align}

\begin{figure}[h]
    \centering
    \includegraphics[width=.20\textwidth]{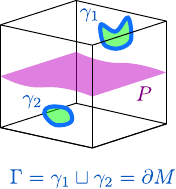}
    \caption{Illustration of the dual Ising model defined in \cref{eq:dual_3D_Ising_AFM}.
    }
    \label{fig:dual-3D-Ising}
\end{figure}

We observe that 
$\Xi^{\rm I}(x, \contourset)$ and $Z(x, \contourset)$ are related to each other by a duality transformation for all $\contourset$
\begin{align}
    \Xi^{\rm I}(x, \contourset) \propto Z(x, \contourset).
\end{align}
This follows from viewing \cref{eq:def_conditional_ensemble} as a ``low temperature expansion'' of \cref{eq:dual_3D_Ising_AFM}, where the membrane configurations for $Z(x,\contourset)$ are identified as domain walls for $\Xi^{\rm I}(x,\contourset)$.
Note that homologically distinct membranes can appear in \cref{eq:def_conditional_ensemble}, and this is accounted for by having $\state_{\rm anc}$ in \cref{eq:dual_3D_Ising_AFM}, which effectively introduces periodic and antiperiodic boundary conditions in the Ising model.

With this duality we note that
\begin{align}
    \label{eq:sigma_aux_Ising_empty}
    \mathcal{O}(x, \emptyset) =
    \avg{\state_{\rm aux}}^{\rm I}_{x, \emptyset},
\end{align}
where $\avg{\cdot}^{\rm I}_{x,\emptyset}$ is taken with respect to the ensemble $\Xi^{\rm I}(x,\emptyset)$.
We know $\avg{\state_{\rm aux}}^{\rm I}_{x, \emptyset} > 0$ by a Griffiths-Kelly-Sherman (GKS) inequality \cref{eq:GKS_1} applied to $\Xi^{\rm I}(x,\emptyset)$, which has all  couplings being ferromagnetic.
More generally, for an arbitrary $\Gamma$,
\begin{align}
    |\mathcal{O}(x, \contourset)| =
    |\avg{\state_{\rm aux}}^{\rm I}_{x, \contourset}|.
\end{align}

Next, we note that for an arbitrary $\contourset$
\begin{align}
    \left| \mathcal{O}(x, \contourset) \right|
    =
    \left| \avg{\state_{\rm aux}}^{\rm I}_{x, \contourset} \right|
    \leq
    \avg{\state_{\rm aux}}^{\rm I}_{x, \emptyset}
    = \mathcal{O}(x, \emptyset).
\end{align}
This follows from another GKS inequality \cref{eq:GKS_3}, allowing a comparison between $\Xi^{\rm I}(x, \contourset)$ and $\Xi^{\rm I}(x, \emptyset)$.

Finally, we have for sufficiently large $L$
\begin{align}
    \label{eq:Ising_apbc_F_upper_bound}
    \left| \mathcal{O}(x, \contourset) \right|
    \leq
    \avg{\state_{\rm aux}}^{\rm I}_{x, \emptyset} \leq e^{-c L}
\end{align}
for some $c > 0$ whenever $x > x_\flat$.
The last inequality follows from an explicit cluster expansion for $x > x_\flat$, see \cref{sec:cluster_expansion}.
\end{proof}

\subsection{Statement and proof of \cref{lemma:wilson_loop_perimeter_lower_bound}}

\begin{lemma}
\label{lemma:wilson_loop_perimeter_lower_bound}
For all $x \in (x_\flat, 1]$ and sufficiently large $L$, $\exists C(x) < \infty$ such that
\begin{align}
    \forall \contourset, \contour, \quad
    \frac{Z(x, \contourset \oplus \contour)}{Z(x, \contourset)}
    \leq \exp{C(x) |\contour|}.
\end{align}
\end{lemma}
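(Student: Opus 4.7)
The lemma is a perimeter-law bound on the sensitivity of the conditional partition function $Z(x, \cdot)$ to its boundary. A naive deterministic bijection $M \mapsto M \oplus P_\contour$ for any reference filling $P_\contour$ with $\partial P_\contour = \contour$ yields only
\begin{align}
\frac{Z(x, \contourset \oplus \contour)}{Z(x, \contourset)} \leq x^{-|P_\contour|},
\end{align}
which is insufficient because $|P_\contour|$ can scale as $|\contour|^2$ for a generic contour. The improved perimeter-law bound must therefore be \emph{entropic} in origin, reflecting (as in \cref{lemma:bottle_symmetry}) that in the deconfined regime the bare membrane surface tension $-\log x$ is renormalized to zero at long distances, leaving only a residual line tension on the boundary of a membrane.

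\paragraph{Approach.}
I will reduce, as in the proof of \cref{lemma:bottle_symmetry}, to the dual 3D Ising representation $Z(x, \contourset) \propto \Xi^{\rm I}(x, \contourset)$ at inverse temperature $\beta = -\tfrac{1}{2}\log x$, with frustrated (AFM) bonds on any dual surface bounded by $\contourset$. For $x > x_\flat$ the dual Ising is in its disordered (high-temperature) phase, where $\log \Xi^{\rm I}(x, \cdot)$ admits a convergent Mayer expansion over connected clusters $C = \{S_i\}$ of primal $1$-cycles --- the same machinery invoked in \cref{sec:cluster_expansion}. The $\contourset$-dependence enters \emph{only} through $\mathbb{Z}_2$ linking signs $(-1)^{\mathrm{lk}(S_i, \contourset)}$ on each polymer, so the log-ratio collapses to a signed sum over clusters whose total linking with $\contour$ is odd. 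Taking absolute values removes the $\contourset$-dependence altogether and reduces the lemma to the estimate
\begin{align}
\sum_{C:\, \mathrm{lk}(C, \contour) \equiv 1} |c(C)|\, \tanh(\beta)^{|C|} \;\leq\; C(x)\, |\contour|,
\end{align}
where $c(C)$ is the Ursell coefficient and $|C| = \sum_i |S_i|$.

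\paragraph{Perimeter-law counting and main obstacle.}
To establish this estimate, observe that any cluster linked with $\contour$ must contain a polymer $S$ whose bounded $2$-chain is pierced by an edge of $\contour$ (the topological linking being witnessed by a geometric piercing). For each edge $e$ of $\contour$, the number of polymers of length $\ell$ whose bounded disc contains $e$, weighted by $\tanh(\beta)^\ell$, is bounded by $\mathrm{poly}(\ell)\,(z' \tanh\beta)^\ell$ via a connective-constant argument with lattice connective constant $z'$; this sums to $O(1)$ for sufficiently small $\beta$ (i.e. $x$ close enough to $1$). Summing over the $|\contour|$ edges of $\contour$, and extending the per-polymer bound to clusters by the standard cluster-expansion domination, yields the $O(|\contour|)$ bound; exponentiation then gives the lemma. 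The hard part is precisely this perimeter-versus-area distinction: naively counting all clusters intersecting any \emph{spanning surface} of $\contour$ would contribute $O(|\contour|^2)$, because the spanning surface itself has that area. The resolution is to index linked clusters by edges of $\contour$ itself --- the physical locus of the topological obstruction --- rather than by plaquettes of a spanning surface, exploiting that the disc of a polymer grows only polynomially with its length while its weight decays exponentially. This is the dual-Ising incarnation of the perimeter law of Wilson loops in the deconfined phase, and it is where the convergence condition $x > x_\flat$ enters the argument.
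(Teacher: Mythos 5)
Your proposal is correct in essentials but takes a genuinely different route from the paper. The paper proceeds in two cleanly separated steps: (i) a second Kramers--Wannier duality writes $Z(x,\contourset)/Z(x,\emptyset) = \langle W(\contourset)\rangle^{\rm G}_x$ as a Wilson loop in the dual pure gauge theory, and then a GKS inequality applied to $W(\contourset) = W(\contourset\oplus\contour)\,W(\contour)$ gives $\langle W(\contourset\oplus\contour)\rangle^{\rm G}_x / \langle W(\contourset)\rangle^{\rm G}_x \leq 1/\langle W(\contour)\rangle^{\rm G}_x$ --- an \emph{exact} inequality, uniform in $\contourset$ and valid at all temperatures; (ii) a cluster expansion then establishes the perimeter-law lower bound $\langle W(\contour)\rangle^{\rm G}_x \geq e^{-C(x)|\contour|}$. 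You instead bypass GKS entirely: working with the dual Ising cluster expansion for $\log Z(x,\contourset)$ directly, you observe that the difference $\log Z(x,\contourset\oplus\contour) - \log Z(x,\contourset)$ is supported on clusters that link $\contour$, and that taking absolute values strips off the $(-1)^{\mathrm{lk}(\cdot,\contourset)}$ prefactors, so the bound is manifestly $\contourset$-uniform. The final counting step --- associating each linked polymer to an edge of $\contour$ pierced by a filling of the polymer, then beating the $\mathrm{poly}(\ell)$ factor with the $(\tanh\beta)^\ell$ decay --- is essentially identical to the paper's App.~B.4 ($\mathbb{1}_{|\contour_1\cap P|\text{ odd}} = \mathbb{1}_{|\partial P\cap M(\contour_1)|\text{ odd}} \leq \sum_{j\in\partial P}\mathbb{1}_{M(\contour_1)\ni j}$, with $|M(\contour_1)|\leq|\contour_1|^2$). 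What the GKS route buys is a crisper separation of the $\contourset$-uniformity from the convergence issue: the uniformity is packaged in a one-line exact inequality, and the cluster expansion only ever sees a single bare Wilson loop $W(\contour)$. What your route buys is self-containment: it needs no GKS inequality, only the cluster expansion already required for \cref{lemma:bottle_symmetry}, at the cost of having to carefully couple the choice of reference membranes for $\contourset$ and $\contourset\oplus\contour$ so the signs factorize. Both routes have the same range of validity (convergence of the cluster expansion, i.e.\ $x > x_\flat$). One small point worth being explicit about: not every polymer admits a bounded filling on the torus (noncontractible loops do not), so your ``bounded $2$-chain of $S$'' argument, like the paper's choice of $M(\contour_1)$, should be accompanied by the observation that such polymers have length $\geq L$ and are exponentially negligible.
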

\begin{proof}
We first observe that, by another duality, the conditional ensemble $Z(x, \contourset)$ takes the form of a Wilson loop \textit{in a pure gauge theory},
\begin{align}
    \label{eq:ratio_conditional_ensemble_Wilson_loop}
    \frac{Z(x,\contourset)}{Z(x, \emptyset)} = \avg{W(\contourset)}_x^{\rm G},
\end{align}
where the expectation value is taken with respect to the following model
\begin{align}
    \label{eq:dual_pure_LGT}
    \Xi^{\rm G}(x) = \sum_{\state} \exp(\beta^\ast \sum_{\widetilde{\Box}_{\widetilde{ijkl}}} \state_{\widetilde{ij}} \state_{\widetilde{jk}} \state_{\widetilde{kl}} \state_{\widetilde{li}}), \quad \tanh \beta^\ast = x.
\end{align}
Note that the model differ from that in \cref{eq:Z_def}, in that the spins now live on \textit{dual} edges of the form $\avg{\widetilde{ij}}$.
\cref{eq:ratio_conditional_ensemble_Wilson_loop} follows from another Kramers-Wannier duality,
\begin{align}
    \Xi^{\rm G}(x) \propto Z(x, \emptyset).
\end{align}

Therefore,
\begin{align}
    \frac{Z(x, \contourset \oplus \contour)}{Z(x, \contourset)}
    = \frac{\avg{W(\contourset \oplus \contour)}_{x}^{\rm G}}{\avg{W(\contourset)}_x^{\rm G}}.
\end{align}
Applying the GKS inequality \cref{eq:GKS_2} to $W(\contourset) = W(\contourset \oplus \contour) W(\contour)$ with respect to $\Xi^{\rm G}(x)$, we have
\begin{align}
    \label{eq:wilson_loop_upper_bound}
    \frac{\avg{W(\contourset \oplus \contour)}_x^{\rm G}}{\avg{W(\contourset)}_x^{\rm G}} \leq \frac{1}{\avg{W(\contour)}_x^{\rm G}} \leq \frac{1}{\exp{-C(x)|\contour|}}.
\end{align}
The last inequality follows from a ``perimeter law'' lower bound for the Wilson loop $\avg{W(\contour)}_x^{\rm G}$ for sufficiently large $x$ and arbitrary $\contour$.
This result should be well established, 
and we provide an explicit calculation in \cref{sec:cluster_expansion}.
\end{proof}

\section{Arrhenius law and critical mixing times \label{sec:critical_mixing_times}}

The bottleneck theorem (\cref{thm:tmix_bottleneck}) makes precise intuitions provided by the Arrhenius law for thermally activated processes~\cite{KRAMERS1940284, LANGER1969258}, that is, the \textit{escape rate} from a potential well is
\begin{align}
    \label{eq:arrhenius}
    t_{\rm esc}^{-1} \propto e^{-\Delta F},
\end{align}
where $\Delta F$ is the free energy barrier.
Explicit calculations can be performed in the path integral formalism in terms of ``instantons'' for single-particle systems, or for many-body systems where the phase space can be parameterized by a single variable (see e.g.~\cite{LANGER1969258, polyakov1987gauge, cugli2025introduction}).
For general many-body systems, however, such calculations seem to be quite difficult.

With such intuitions, we discuss (heuristically) aspects of the Glauber dynamics that are not captured by the equilibrium state.
Consider the two phase transitions out of the deconfined phase, either by increasing $T$ (confinement transition), or by increasing $h$ (Higgs transition).
The two thermodynamic transitions are both second order, and both have 3D Ising critical exponents; indeed, they are related by an exact electro-magnetic duality.
However, the critical mixing times exhibit markedly different scaling at the two transitions, as we now discuss.

\begin{itemize}

\item

\emph{Higgs transition} ---
Across the Higgs transition, the flux loops maintain a nonzero line tension (small $T$), whereas the bare surface tension increases above a critical value [from $h < h_c(T)$ to $h > h_c(T)$].
Above the Higgs transition ($h > h_c(T)$), for an initial state within $\Omega_-$ to relax into $\Omega_+$ (thereby losing the initial memory), the free energy barrier is set by a flux loop whose length $\ell_\ast$ is given by 
\begin{align}
    \sigma_{\rm eff} \cdot (\ell_\ast)^2 \approx \lambda_{\rm eff} \cdot \ell_\ast \quad \Leftrightarrow \quad \ell_\ast \approx \sigma_{\rm eff}^{-1} \cdot \lambda_{\rm eff},
\end{align}
where $\sigma_{\rm eff} > 0$ is the effective surface tension, and $\lambda_{\rm eff}$ the effective line tension.
That is, once the free energy barrier for creating a flux loop of size $\geq \ell_\ast$ is overcome, the system can lower its free energy by further expanding the loop, until the memory is lost.
This discussion is in parallel with that of 2D Ising model in a nonzero magnetic field, see~\cref{thm:Ising_rapid_tmix_field}.

When approaching the Higgs transition from the Higgs phase, we expect $\ell_\ast$ to diverge, and correspondingly $\sigma_{\rm eff} \to 0$.
Assuming the correlation length $\xi$ is the only length scale in the system, we expect $\ell_\ast \propto \xi$ when close to the critical point, and $\ell_\ast \propto L$ when $\xi \gg L$.
From \cref{eq:arrhenius} we have that at the Higgs transition
\begin{align}
    \label{eq:t_mix_higgs}
    t^{\rm H}_{\rm mix}(h=h_c(T), T) \propto \exp( \Delta F), \  \Delta F \propto \lambda_{\rm eff} \cdot L.
\end{align}
Therefore, at the Higgs transition, we expect the system to be slow mixing with mixing time $\ln t_{\rm mix} = \Omega(L)$, as in the deconfined phase.

\item

\emph{Confinement transition} ---
Instead, at the confinement transition, the loss of memory is due to the vanishing of $\lambda_{\rm eff}$, while $\sigma_{\rm eff} = 0$ throughout.
Analogies with the 2D Ising model at $T = T_c$ can be made, and we expect a critical slowdown~\cite{hohenberghalperin1977, slylubetzky2012critical} of the form
\begin{align}
    \label{eq:t_mix_confinement}
    t^{\rm C}_{\rm mix}(h, T=T_c(h)) \propto L^z,
\end{align}
where $z$ is the dynamic exponent.

\end{itemize}

We test these predictions with a numerical memory experiment.
We initialize the system in $\Omega_-$ as in \cref{fig:logical_states}(b), obtained by flipping a single noncontractible membrane on the all-0 state $\state = \mathbf{0} \in \Omega_+$.
We then perform Glauber updates according to the Metropolis rule \cref{eq:metropolis} to evolve the system.
We use a combination of single-spin updates and block updates of all six spins around a given vertex,
in order to mix the system slightly faster.\footnote{These update rules differ slightly from those defined in \cref{sec:glauber-metropolis} due to the additional six-spin flip move. However, such a move does not change the flux loop configuration, and thus the same bottles and bottlenecks in \cref{sec:main_proof} (which are only defined through flux loop configurations) also apply to this dynamics.
Therefore, this dynamics mixes exponentially slowly within the range of applicability of \cref{thm:Ising_LGT_exp_tmix_x_general}.
More generally, we expect no qualitative differences with single-spin-flip dynamics, throughout the phase diagram.}
By convention, a single Glauber step consists of $3L^3$ attempted single-spin updates and $L^3$ attempted block updates.

To extract the memory time we 
decode the system by 
running a variant of the sweep decoder~\cite{kubica2024PCA}.
Similarly to the decoder $\decoder$ defined in \cref{sec:main_proof} and \cref{fig:ising_decoder}, the sweep decoder takes a state $M \in \Omega$ to a no-flux-loop state, such as those in \cref{fig:logical_states}.
We say the memory fails whenever the decoder returns a state in $\Omega_+$.
We fit the failure probability
to an exponential decay 
\begin{equation}
\mathbb{P}_{\rm fail}(t) = C \cdot \exp (-t/t_\textrm{mem}) + \mathbb{P}_{\rm fail}(t\to\infty).
\end{equation}
We use the extracted memory time $t_{\rm mem}$ of the Wilson loop as a lower bound for the mixing time $\tmix$.

We first find the location of the phase transitions ($T_c \approx 1.31$ at a fixed $h = 0.2$ for the confinement transition, and $h_c \approx 0.246$ at a fixed $T = 1.1$ for the Higgs transition) from singularities of thermodynamic observables (in particular, the specific heat), and use these values for our data collapse below.
We find that dynamical phase transitions occur at the same location as the thermodynamic ones.

\begin{figure*}
    \includegraphics[width=.95\textwidth]{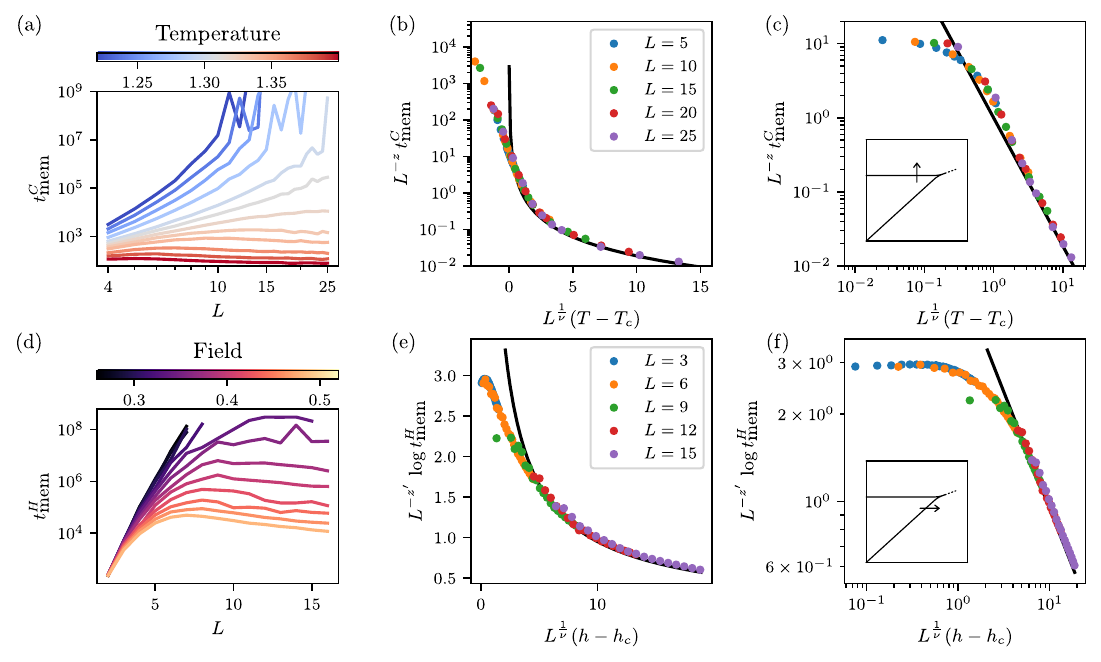}
    \caption{Raw data and data collapses at the two transitions. 
    (a,b,c) Data near the confinement (temperature-driven) transition.
    (a) Raw memory times on a log-log plot, so that the data at criticality is roughly linear. Note the slight upward bend indicating that $T=1.3$ is below the transition. 
    (b) Data collapse according to \cref{eq:tmem_scaling_form_confinement}, with parameters $\nu=0.64$, $z=2.7$, and $T_c=1.313$.
    The black line is the asymptotic behavior $\Psi(w) \propto w^{-\nu z}$, see \cref{eq:Psi_W_asymptotics_confinement}.
    (c) The same collapsed data as in (b) but on a log-log plot to highlight the algebraic decay at large $L^{1/\nu}(T-T_c)$. 
    (d,e,f) Data near the Higgs (field-driven) transition.
    (d) Raw memory times on a semilog plot. Note that each field value peaks and then decays, different from the confinement regime. 
    (e) Data collapse according to \cref{eq:t_mix_H_collapse} (note the log in the $y$ axis) with parameters $h_c=0.246$, $z'=0.97$, $\nu=0.64$.
    The black line is the asymptotic behavior $\Phi(w) \propto w^{-\nu'}$ with  $\nu' \approx 0.8$, compare \cref{eq:Phi_W_asymptotics_Higgs}.
    The discrepancy between $\nu$ and $\nu'$ is discussed in the main text.}
    \label{fig:transitions}
\end{figure*}

The different scaling of $t_{\rm mem}$ at the two transitions is supported by our numerical results.
As we show in \cref{fig:transitions}(a-c), across the confinement transition where we set $h = 0.2$ and vary $T$, the memory time is superpolynomial below the critical temperature and subpolynomial above the critical temperature.
The results can be collapsed against the following scaling function
\begin{align}
\label{eq:tmem_scaling_form_confinement}
t^{\rm C}_{\rm mem} = L^z \cdot \Psi[ (T-T_c) \cdot L^{1/\nu} ],
\end{align}
where we find $T_c\approx 1.31$, $\nu \approx 0.64$, and $z \approx 2.7$.
The critical exponent $\nu$ is consistent with that of the 3D Ising model, and $z$ is consistent with previous numerical results on the confinement transition~\cite{PhysRevB.97.024432}.\footnote{
We note that this dynamic exponent differs from that of the 3D Ising model.
Therefore, we have another example of two transitions with same static exponents (they are related again by duality) but different dynamic exponents.
For model A dynamics of the 3D Ising model, previous studies have found $z \approx 2.02$, see Ref.~\cite{PhysRevE.101.022126} and references therein.}
To match with $t^{\rm C}_{\rm mem}$ on the two sides of the transition, we expect the asymptotics of $\Psi$ to be
\begin{align}
    \label{eq:Psi_W_asymptotics_confinement}
    \Psi(w) \propto 
    \begin{cases}
        w^{-\nu z}, \quad & w \to +\infty \\
        \exp{|w|^\nu}, \quad & w \to -\infty
    \end{cases}.
\end{align}
The asymptotic behavior at $w \to +\infty$ is confirmed by our data, see \cref{fig:transitions}(c).

The results for the Higgs transition are shown in \cref{fig:transitions}(d-f), where we take a cut across the phase diagram along $T = 1.1$.
At the critical point, the memory time appears to be superpolynomial, see \cref{eq:t_mix_higgs}, as
in the memory phase.
However, outside of the memory phase the memory time still increases superpolynomially for small system sizes, and only
saturates
after reaching the correlation length scale $\xi$ that diverges at the transition.
We find the following data collapse 
\begin{align}
\label{eq:t_mix_H_collapse}
    \ln t^{\rm H}_{\rm mem} = L \cdot \Phi[ (h-h_c) \cdot L^{1/\nu} ],
\end{align}
where $h_c = 0.246$, and the same 3D Ising correlation length exponent $\nu \approx 0.64$ is used.
Here, we expect
\begin{align}
    \label{eq:Phi_W_asymptotics_Higgs}
    \Phi(w) \propto 
    \begin{cases}
        w^{-\nu}, \quad & w \to +\infty \\
        \rm{const.}, \quad & w \to -\infty
    \end{cases}.
\end{align}
We note that the numerical asympotic powerlaw decay of $\Phi(w)$ at large positive $w$ seem to differ from $\nu$, as we show in \cref{fig:transitions}(f).
We attribute this to the non-monotonic (and presumably non-universal) dependence of $t_{\rm mem}^{\rm H}$ on $L$ at small $L$, which are still observable at small systems sizes, see \cref{fig:transitions}(d).
We have not been able to access $h < h_c$ due to the rapid divergence of the memory time with $L$.

In Ref.~\cite{poulin-melko-hastings}, it was observed that the memory phase appears to extend beyond the Higgs transition. 
The long memory time just outside the Higgs phase can therefore be attributed to rapid (exponential) dependence of $t_{\rm mem}$ on the correlation length $\xi$, as captured by \cref{eq:t_mix_H_collapse}.

We also performed the memory experiment at the multicritical point (where the Higgs and confinement transitions meet), and found a memory time that scales as $t_{\rm mem} \propto L^{2.5}$. 
This exponent is consistent with findings in Ref.~\cite{nahum2020selfdual}.
However, 
it is not clear to us whether the memory 
time of the Wilson loop is still the longest time scale in the system, and whether it provides a tight lower bound for
the mixing time.
We leave this question for future work.

\section{Discussions} \label{sec:discussion}

Our main result \cref{thm:Ising_LGT_exp_tmix_x_general} establishes the $\mathbb{Z}_2$ LGT as a simplest example of a finite-temperature classical self-correcting memory, stable to perturbations that explicitly break the one-form symmetry of the model.
We note that in physical systems one-form symmetries are often emergent and not present at microscopic length scales.
We illustrate in detail how \textit{thermal fluctuations} compete with the nonzero surface tension (which explicitly favors one bottle over the other), thereby ``renormalizing'' the surface tension to zero and restoring the symmetry, see \cref{lemma:bottle_symmetry}. 

Our result applies to an open subregion of the deconfined phase; however, our proof illustrates that slow mixing relies on the properties of zero membrane tension and nonzero flux loop tension, which are universal to the deconfined phase.
We thus expect slow mixing to hold in the entirety of this phase.
This expectation is supported by our numerical memory experiments in \cref{sec:critical_mixing_times}.
Proving this is an open problem we leave for future studies.

Our work can be extended in a few different directions.

First, we have used the Glauber dynamics as a concrete model for relaxation, and have shown the memory is robust to the perturbation of a longitudinal field.
It would be desirable to prove the robustness of the memory to a broader class of perturbations --- such as generic local terms to the Hamiltonian (including those that introduce quantum fluctuations, in which case recent generalizations of the classical bottleneck theorem will apply~\cite{gamarnik2024slow,rakovszky2024bottlenecks}),
as well as stochastic noise processes that break detailed balance --- thereby establishing the $\mathbb{Z}_2$ LGT as an ``absolutely stable'' memory.

Second, we have rigorously established an ``entropic'' mechanism for emergent symmetries and robust memories. 
It would be interesting to explore if the notion of emergent symmetry can be useful in proving self-correction in  certain models.
A longstanding question in this direction is the absolute stability of the 4D toric code beyond its unperturbed fixed point~\cite{DKLP2001topologicalQmemory, Alicki2010}.
It is also interesting to ask
what other symmetries can be emergent, and whether they can be leveraged to protect self-correcting (classical or quantum) memories in low dimensions.

Below, we discuss two concrete related questions that connect to the literature.

\subsection{Formulating conditions for stable memory}

It would be useful to have a set of necessary or sufficient conditions for the stability of memory, in similar spirits of the proof of perturbative stability of the spectrum of gapped Hamiltonians~\cite{Bravyi_2010, Bravyi_2011, yin-lucas-2025-ldpc, deroeck-2025-ldpc}.
Recent work~\cite{ma2025circuitbasedchatacterizationfinitetemperaturequantum} has shown that sufficiently strong decay of local correlations can serve as one set of sufficient conditions.

Our work suggests a different, and perhaps more explicit direction for a set of such conditions. We rely on redundancies in our Hamiltonian for its energy barrier, and on local symmetries in our Hamiltonian for the entropic effects that promote stability.
In the language of homological algebra, 3D $\mathbb{Z}_2$ LGT is described by a length-3 chain complex connecting 4 complexes~\cite{A-Yu-Kitaev_1997, bombin2006homological, bravyi2013homological, breuckmann2018phd}.
By convention, bits and checks are vector spaces described by the middle two complexes, respectively; the first complex describes local symmetries of the model, while the last complex describes the redundancies among the checks.
In this representation, the \textit{local} check redundancies guarantee that the flux loop excitations are closed, and the \textit{local} symmetries (i.e. gauge invariance) leads to emergent symmetries in the presence of a symmetry-breaking field.
Thus, it might be possible to extend our proof to any model built from a suitably defined length-3 chain complex.

In the $\mathbb{Z}_2$ LGT, the combination of tensionless membranes with tensionful flux loops only breaks the Gibbs state into a finite number of bottles, labeled by the values of the (corrected) Wilson loops. The reason for this is the $\mathbb{Z}_2$ value of the membranes: a pair of membranes can be created by a series of local moves. 
Reference~\onlinecite{Stahl2025} explores a model with similar tensionless membranes with tensionful boundaries, but with the membranes carrying a $\mathbb{Z}_2^m$ label. Two membranes of the same color can annihilate with each other, but two of different colors cannot.
The result is that the Gibbs state breaks into $\exp(L)$ bottles, labeled by patterns of colors. Whether this Gibbs state shattering can remain robust in the same way as $\mathbb{Z}_2$ LGT remains an open question.

\subsection{Consequences of emergent one-form symmetries beyond the deconfined phase}

We note that in order to prove \cref{lemma:bottle_symmetry}, no conditions on $y$ (fugacity of flux loops) needs to be imposed.
Therefore, \cref{lemma:bottle_symmetry} holds beyond the deconfined phase, and our result seems to separate the region outside the deconfined phase into two parts, one with an emergent symmetry and one without, see Fig.~\ref{fig:lattice}(c).
However, given that the two parts belong to the same thermodynamic phase, physical consequences of such a separation are unclear. Previous work has shown that Higgs phases are SPTs of higher-form symmetries, and that the Higgs and confined phases of $\mathbb{Z}_2$ LGT are separated by a boundary phase transition~\cite{verresen2024higgscondensatessymmetryprotectedtopological, thorngren2023condensates}. Our work then raises the question of whether the boundary phase transition coincides with the loss of the emergent symmetry.

Additionally, Refs.~\onlinecite{nahum2020selfdual, nahum2024patching} have shown that a notion of emergent one-form symmetry beyond the deconfined phase may have consequences to the success (or failure) of a \textit{patching algorithm} that constructs fictitious Ising order parameters within the model.
More recently, Ref.~\onlinecite{liu2025detecting} made further connections of an emergent symmetry with a quantum error correction algorithm, when formulated within the quantum toric code model in 2D.
We expect that conditional bottles defined in \cref{eq:def_conditional_ensemble,eq:def_conditional_ensemble_2} and relaxed versions of \cref{lemma:bottle_symmetry} might provide a complementary tool for further explorations of emergent symmetries for this model and related ones.

Inside the deconfined phase, there are two Gibbs state components with equal weight, separated by a bottleneck. The existence of the bottleneck is decoder independent, so the notion of emergent one-form symmetry is also decoder independent. 
Outside of the phase, we still say that there are two sectors with equal weight, but the sectors are only defined by the decoder, but not by any bottleneck. Thus, we can still say that there is a one-form symmetry here, but this notion becomes decoder dependent.

\section*{Acknowledgements}

We thank Ethan Lake, Jeongwan Haah, Shengqi Sang, Nicholas O'Dea,   Adam Nahum, Zhehao Zhang, Ruochen Ma, Ruihua Fan, Sagar Vijay, Matthew Fisher, Dominik Hangleiter, Nathaniel Selub, Yunchao Liu, Tibor Rakovszky, Curt von Keyserlingk, and David Huse for helpful discussions.
We also thank Sourav Chatterjee for informative discussions and comments.

C.S. and V.K. are supported by the Office of Naval Research Young Investigator Program (ONR YIP) under Award Number N000142412098.
V.K. also acknowledges support from the
Packard Foundation through a Packard Fellowship in Science and Engineering.
B.P. acknowledges funding
through a Leverhulme-Peierls Fellowship at the University of Oxford and the Alexander von Humboldt foundation through a Feodor-Lynen fellowship.
Y.L. was supported in part by the Gordon and Betty Moore Foundation's EPiQS Initiative through
Grant GBMF8686, in part by the US Department of Energy, Office of Science under Award No DE-SC0019380, and in part by a Q-FARM Bloch Postdoctoral Fellowship at Stanford University.
We acknowledge the hospitality of the Kavli Institute for Theoretical Physics (KITP), which is supported in part by grant NSF PHY-2309135.

\bibliography{main}

\clearpage

\appendix

\begin{widetext}

\section{Recap -- GKS inequalities \label{sec:GKS}}

Here we collect a few Griffiths-Kelly-Sherman inequalities~\cite{griffiths1967a_correlations, griffiths1967b_correlations, griffiths1967c_correlations, kelly1968general} that we used multiple times in \cref{sec:main_proof}.
For a pedagogical introduction see e.g. Ref.~\cite{friedli_velenik_2017}.

\begin{apptheorem}[Theorem 3.49 of~\cite{friedli_velenik_2017}]
\label{thm:GKS}
Let $\mathcal{G}$ be any graph.
To each vertex $j \in \mathcal{G}$ we associate an Ising spin $\state_j$, and to each subset $C \subseteq \mathcal{G}$ we associate a non-negative ``coupling constant'' $K_C \ge 0$.
Consider the following model of Ising spins
\begin{align}
    Z_K \coloneqq \tr_{\state} \exp \left[ \sum_C K_C \state_C \right], \nonumber
\end{align}
where we define $\state_C = \prod_{j \in C} \state_j$.
Then we have 
\begin{align}
\label{eq:GKS_1}
    & \forall A \subseteq \mathcal{G}, \quad \langle \state_A \rangle_{Z_K} \ge 0, \\
\label{eq:GKS_2}
    & \forall A, B \subseteq \mathcal{G}, \quad \langle \state_A \state_B \rangle_{Z_K} \ge \avg{\state_A}_{Z_K} \avg{\state_B}_{Z_K}.
\end{align}
We note that here $A, B$ need not be disjoint.
\end{apptheorem}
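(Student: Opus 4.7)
The plan is to establish the two inequalities sequentially, following the classical route: a direct expansion argument for \cref{eq:GKS_1}, then Ginibre's doubling trick to reduce \cref{eq:GKS_2} to it. For \cref{eq:GKS_1}, I would Taylor-expand the Boltzmann weight,
\begin{align*}
    \exp\Bigl[\sum_C K_C \state_C\Bigr] = \sum_{\{n_C\}} \prod_C \frac{K_C^{n_C}}{n_C!}\, \state_C^{n_C},
\end{align*}
and use $\state_j^2 = 1$ to collapse each monomial $\state_A \prod_C \state_C^{n_C}$ to $\state_S$ for some subset $S \subseteq \mathcal{G}$ determined by $A$ and the parities of the $n_C$. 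The unnormalized expectation $\tr_\state [\state_A \exp(\sum_C K_C \state_C)]$ is then a non-negative linear combination (since each $K_C \geq 0$) of traces $\tr_\state \state_S$, each of which equals $2^{|\mathcal{G}|}$ when $S = \emptyset$ and $0$ otherwise (by pairing configurations related by flipping $\state_j$ at any $j \in S$). Dividing by $Z_K > 0$ yields $\avg{\state_A}_{Z_K} \geq 0$.

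For \cref{eq:GKS_2}, I would employ Ginibre's doubling trick. Introduce a second independent copy with spins $\state'$, so that
\begin{align*}
    \avg{\state_A \state_B}_{Z_K} - \avg{\state_A}_{Z_K}\avg{\state_B}_{Z_K} = \tfrac{1}{2}\,\bigl\langle (\state_A - \state'_A)(\state_B - \state'_B) \bigr\rangle_{\mathrm{prod}},
\end{align*}
where the right-hand side is taken in the product measure on $(\state, \state')$. Performing the change of variables $\tau_j := \state_j \state'_j$ (equivalently $\state'_j = \state_j \tau_j$), one obtains $\state_A - \state'_A = \state_A(1 - \tau_A)$, similarly for $B$, and $\state_C + \state'_C = \state_C(1 + \tau_C)$, so the doubled Boltzmann weight becomes $\exp\bigl[\sum_C K_C(1 + \tau_C)\,\state_C\bigr]$ with effective couplings $K_C(1 + \tau_C) \geq 0$ for every $\tau$. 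Summing over $\state$ at fixed $\tau$ then gives a non-negative quantity by \cref{eq:GKS_1} just proven (applied to this $\tau$-dependent spin system), while the prefactor $(1 - \tau_A)(1 - \tau_B) \in \{0, 4\}$ is manifestly non-negative. Summing over $\tau$ completes the argument.

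These are standard textbook manipulations (cf.\ Theorem 3.49 of~\cite{friedli_velenik_2017}), so I do not expect any substantive obstacle. The main conceptual step is the setup of the doubling: the change of variables $\tau = \state\state'$ reduces the correlation inequality for the single system to positivity of a one-point function in a family of systems whose effective couplings depend on a quenched $\tau$-configuration but remain non-negative throughout. The purpose of recording this proof here is simply to make the paper's repeated use of \cref{eq:GKS_1,eq:GKS_2} in \cref{sec:main_proof} self-contained.
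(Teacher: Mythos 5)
Your proposal is correct, and it is essentially the textbook proof: the Taylor-expansion positivity argument for the first inequality, followed by Ginibre's doubling trick (change of variables $\tau_j = \state_j\state'_j$ in the product measure) to reduce the second inequality to the first. The paper itself does not re-derive this result --- it cites it directly as Theorem 3.49 of~\cite{friedli_velenik_2017}, which uses exactly this approach --- so your write-up would, if included, make the appendix self-contained in precisely the intended way. All the small checks go through: the prefactor $(1-\tau_A)(1-\tau_B)\in\{0,4\}$ is non-negative, the effective couplings $K_C(1+\tau_C)\in\{0,2K_C\}$ remain non-negative for every fixed $\tau$, and the $\state$-sum at fixed $\tau$ is the unnormalized one-point function of $\state_{A\triangle B}$ in a ferromagnetic system, to which the first inequality applies.
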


~

\begin{appcorollary}[Exercise 3.31 of~\cite{friedli_velenik_2017}]
Let $K$ and $K'$ be two sets of coupling constants satisfying $K_C \geq |K'_C|$ for all $C \subseteq \mathcal{G}$.
We have
\begin{align}
\label{eq:GKS_3}
    \forall A \subseteq \mathcal{G}, \quad \langle \state_A \rangle_{Z_K} \ge \left| \langle \state_A \rangle_{Z_{K'}} \right|.
\end{align}
\end{appcorollary}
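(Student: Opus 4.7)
The plan is to deduce the comparison inequality from the basic GKS positivity \cref{eq:GKS_1} via the classical Ginibre duplication trick. The key point is to realize a joint system with couplings $K$ and $K'$ as an Ising model with manifestly non-negative effective couplings, after a suitable change of variables.

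Concretely, I would consider two independent copies of the system, with spins $\state$ governed by $Z_K$ and spins $\tau$ governed by $Z_{K'}$, so that $\langle\state_A\rangle_{Z_K}-\langle\tau_A\rangle_{Z_{K'}}=\langle\state_A-\tau_A\rangle_{Z_K\otimes Z_{K'}}$ and similarly for the sum. Next, I would change variables to $r_j=\state_j$ and $q_j=\state_j\tau_j$ for every vertex $j\in\mathcal{G}$; this is an involution on $\{\pm1\}^\mathcal{G}\times\{\pm1\}^\mathcal{G}$ with unit Jacobian, and under it $\state_C=r_C$ while $\tau_C=r_C q_C$. The joint Boltzmann weight then rewrites as
\begin{align}
\exp\!\Bigl[\sum_C K_C\state_C+\sum_C K'_C\tau_C\Bigr]
=\exp\!\Bigl[\sum_C r_C\bigl(K_C+K'_C q_C\bigr)\Bigr].
\end{align}
For any configuration $q\in\{\pm1\}^\mathcal{G}$, the effective couplings $\tilde K_C(q)\coloneqq K_C+K'_C q_C$ satisfy $\tilde K_C(q)\geq K_C-|K'_C|\geq0$ by hypothesis, so the $r$-sector (at fixed $q$) is an Ising model of the form covered by \cref{thm:GKS}.

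With this setup, I would write $\state_A\mp\tau_A=r_A(1\mp q_A)$ and observe that $1\mp q_A\in\{0,2\}$ is non-negative. Splitting the joint sum as $\sum_q(1\mp q_A)\sum_r r_A\exp[\sum_C r_C\tilde K_C(q)]$ and applying \cref{eq:GKS_1} to the inner $r$-sum for each fixed $q$ yields non-negativity of the inner expectation, hence non-negativity of the full numerator. Dividing by the positive normalization gives
\begin{align}
\langle\state_A\rangle_{Z_K}-\langle\tau_A\rangle_{Z_{K'}}\geq 0,\qquad
\langle\state_A\rangle_{Z_K}+\langle\tau_A\rangle_{Z_{K'}}\geq 0,
\end{align}
and combining the two inequalities produces $\langle\state_A\rangle_{Z_K}\geq|\langle\state_A\rangle_{Z_{K'}}|$, as claimed.

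There is no serious obstacle: the only subtle step is recognizing that the change of variables $(\state,\tau)\mapsto(r,q)$ is the right one so that the hypothesis $K_C\geq|K'_C|$ translates directly into non-negativity of the effective couplings $\tilde K_C(q)$ uniformly in $q$. Once this is in place, the proof is a one-line consequence of \cref{eq:GKS_1} applied copy-by-copy in $q$. The argument makes no use of \cref{eq:GKS_2}, and does not require $\mathcal{G}$ to be finite in any essential way beyond what is already assumed in \cref{thm:GKS}.
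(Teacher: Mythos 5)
Your proof is correct. The paper itself does not prove this corollary --- it simply cites Exercise 3.31 of the Friedli--Velenik text --- so there is no in-paper argument to compare against; your Ginibre duplication argument (change of variables $r_j=\sigma_j$, $q_j=\sigma_j\tau_j$, giving effective couplings $\tilde K_C(q)=K_C+K'_C q_C\ge K_C-|K'_C|\ge 0$, then applying \cref{eq:GKS_1} at fixed $q$) is the standard and complete route to this comparison inequality, and it is a nice feature that it needs only the first GKS inequality \cref{eq:GKS_1} rather than \cref{eq:GKS_2}.
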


\section{Explicit cluster expansion for \cref{lemma:bottle_symmetry,lemma:wilson_loop_perimeter_lower_bound}} \label{sec:cluster_expansion}

Here we provide explicit bounds in \cref{eq:Ising_apbc_F_upper_bound,eq:wilson_loop_upper_bound}, following the standard methods of cluster expansion, see e.g. Chapter 5 of~\cite{friedli_velenik_2017}.

\subsection{Recap of the convergence condition}

Consider the following partition functions defined by a general membrane (possibly with boundaries) $P$,
\begin{align}
    \label{eq:Z_P_pm_appB}
    Z_{P, \pm} = \tr_{\state} \exp \ld \beta \sum_{\avg{ij}} \(\pm 1\)^{\mathbb{1}_{\avg{ij} \in P}} \cdot \state_i \state_j \rd, \quad \text{where $e^{-2\beta} = x$}.
\end{align}
We have that
\begin{align}
    \label{eq:z_two_cases_def}
    \mathcal{Z} \coloneqq \frac{Z_{P, -}}{Z_{P, +}} = \begin{cases}
        (1 - \avg{\state_{\rm aux}}^{\rm I}_{x, \emptyset})/(1 + \avg{\state_{\rm aux}}^{\rm I}_{x, \emptyset}), \quad& 
        \text{if $P$ is a noncontractible, boundaryless (i.e. logical) membrane} \\
        \avg{W(\partial P)}_{x}^{\rm G}, \quad& \text{if $\partial P \neq \emptyset$}
    \end{cases}.
\end{align}
Here, $\avg{\state_{\rm aux}}^{\rm I}_{x, \emptyset}$ is an expectation value taken within the ensemble $\Xi^{\rm I}_{x, \emptyset}$ (see \cref{eq:dual_3D_Ising_AFM}), and $\avg{W(\partial P)}_{x}^{\rm G}$ a Wilson loop expectation value taken within the ensemble $\Xi^{\rm G}_x$ (see \cref{eq:dual_pure_LGT}).
To see this, a Kramers-Wannier duality between \cref{eq:Z_P_pm_appB} and $\Xi^{\rm G}_x$ is again used.

We rewrite $Z_{P, \pm}$ in terms of the so-called \textit{polymer partition functions}~\cite{friedli_velenik_2017}, using a high-temperature expansion
\begin{align}
    & Z_{P, \pm} = 2^{|\Lambda|} \cosh(\beta)^{|\mathcal{E}_\Lambda|} \Upsilon_{\beta, P, \pm}, \nn
    &
    \Upsilon_{\beta, P, \pm} \coloneqq \sum_{\contourset' \subset \contourset}
    \ld\prod_{\contour \in \contourset'} w_\pm(\contour) \rd
    \ld \prod_{\{\contour, \contour'\} \in \contourset'} \delta(\contour, \contour') \rd,
\end{align}
and 
\begin{align}
    & \mathcal{E}_\Lambda = \{\text{links of the lattice $\Lambda$}\}, \\
    & \contourset = \{\contour \subset \mathcal{E}_\Lambda : \text{$\contour$ is connected, $\contour$ incidents each vertex an even number of times}\}, \\
    \label{eq:def_w_polymer}
    & w_\pm(\contour) = (\pm 1)^{|\contour \cap P|} (\tanh \beta)^{|\contour|}, \\
    & \delta(\contour, \contour') = \begin{cases}
        1, \quad &\text{if $\contour, \contour'$ have no vertex in common} \\
        0, \quad &\text{otherwise}
    \end{cases}.
\end{align}
We can rewrite
\begin{align}
    \Upsilon_{\beta, P, \pm} = 1 + \sum_{n \geq 1} \frac{1}{n!} \sum_{\contour_1 \in \contourset} 
    \cdots
    \sum_{\contour_n \in \contourset} 
    \ld \prod_{i=1}^n w_\pm(\contour_i) \rd
    \ld \prod_{1 \leq i < j \leq n} \delta(\contour_i, \contour_j) \rd.
\end{align}
Notice that only pairwise distinct polymer $n$-tuples $(\contour_1, \ldots, \contour_n)$ contribute to this summation, due to the interaction funtion $\delta$.
A \textit{formal} series expansion for the free energies can be written down,
\begin{align}
    \label{eq:log_xi_expansion}
    - F_{\beta, P, \pm} \coloneqq \ln \Upsilon_{\beta, P, \pm} = \sum_{m \geq 1} \sum_{\contour_1} \cdots \sum_{\contour_m}
    \varphi(\contour_1, \ldots, \contour_m) \prod_{i=1}^{m} w_\pm(\contour_i),
\end{align}
where 
\begin{align}
    & \varphi(\contour_1, \ldots, \contour_m) = \frac{1}{m!} \sum_{G \subset G_m} \mathbb{1}_{\text{$G$ is a connected subgraph of the complete graph $G_m$}} \prod_{\{i,j\} \in G} \zeta(\contour_i, \contour_j), \\
    & \zeta(\contour_i, \contour_j) \equiv \delta(\contour_i, \contour_j) - 1.
\end{align}

The series \cref{eq:log_xi_expansion} for $F_{\beta, P, \pm}$ is absolutely convergent provided~\cite{friedli_velenik_2017} that there exists $a : \contour \to \mathbb{R}_{> 0}$ such that 
\begin{align}
    \label{eq:condition_for_convergence}
    \forall \contour_\ast \in \contourset, \quad \sum_{\contour} |w_\pm(\contour)| |\zeta(\contour, \contour_\ast)| e^{a(\contour)} \leq a(\contour_\ast).
\end{align}
In the present case, we choose $a(\contour) = |\contour|$ and sufficiently small $\beta$ so that
\begin{align}
    \label{eq:condition_for_convergence_check}
    \mathrm{LHS} &= \sum_{\contour} \mathbb{1}_{\text{$\contour, \contour_\ast$ have vertices in common}} (\tanh \beta)^{|\contour|} e^{|\contour|} \nn
    &\leq |\contour_\ast| \max_{v \in \contour_\ast} \sum_{\contour \ni v}  (e \cdot \tanh \beta)^{|\contour|} \nn
    &= |\contour_\ast| \sum_{\ell \geq 1} \#\{\contour \ni 0: |\contour| = \ell \} \cdot (e \cdot \tanh \beta)^{\ell} \nn
    &= |\contour_\ast| \sum_{\ell \geq 1} ( (z-1) \cdot e \cdot \tanh \beta)^{\ell} \nn
    & \leq |\contour_\ast| = \mathrm{RHS}.
\end{align}
Notice that absolute convergence can be simultaneously established for both $F_{\beta, P, +}$ and $F_{\beta, P, -}$, as $|w_+(\contour)| = |w_-(\contour)|$.
It follows from \cref{eq:condition_for_convergence} that~\cite{friedli_velenik_2017}
\begin{align}
    \forall \contour_1 \in \contourset, \quad 1 + \sum_{m \geq 2} m \sum_{\contour_2} \ldots \sum_{\contour_m}  |\varphi(\contour_1, \ldots, \contour_m)| \prod_{i=2}^{m} |w_\pm(\contour_i)| \leq e^{a(\contour_1)},
\end{align}
from which the convergence of \cref{eq:log_xi_expansion} follows.

\subsection{Computing differences between free energies}

Given the absolute convergence of the series for both $F_{\beta, P, \pm}$, we can write
\begin{align}
    \label{eq:ln_z_diff}
    \ln \mathcal{Z} &= F_{\beta, P, +} -  F_{\beta, P, -} \nn
    &= \sum_{m \geq 1} \sum_{\contour_1} \cdots \sum_{\contour_m}
    \varphi(\contour_1, \ldots, \contour_m) \ld \prod_{i=1}^{m} w_-(\contour_i) - \prod_{i=1}^{m} w_+(\contour_i) \rd \nn
    &\coloneqq \sum_{m \geq 1} \sum_{X = (\contour_1, \ldots, \contour_m)} \ld \Psi_-(X) - \Psi_+(X) \rd,
\end{align}
where the last summation is over all ordered $m$-tuples $X$ with its elements from $\contourset$.

Comparing the definition of $w_\pm(\contour)$ in \cref{eq:def_w_polymer}, we see that the term $X = (\contour_1, \ldots, \contour_m)$ contributes to \cref{eq:ln_z_diff} if and only if $\sum_{i=1}^{m} |\contour_i \cap P|$ is an odd number.
Observe that
\begin{align}
    \mathbb{1}_{\text{$\sum_{i=1}^{m} |\contour_i \cap P|$ is odd}} \leq \sum_{i=1}^m \mathbb{1}_{\text{$|\contour_i \cap P|$ is odd}}.
\end{align}
With these, we may rewrite \cref{eq:ln_z_diff} as
\begin{align}
    \left| \ln \mathcal{Z} \right | &\leq \sum_{m \geq 1} \sum_{X = (\contour_1, \ldots, \contour_m)} \left| \Psi_-(X) - \Psi_+(X) \right| \nn
    &= \sum_{m \geq 1} \sum_{X = (\contour_1, \ldots, \contour_m)} 
    \mathbb{1}_{\text{$\sum_{i=1}^{m} |\contour_i \cap P|$ is odd}}
    \left| \Psi_-(X) - \Psi_+(X) \right| \nn
    &\leq \sum_{m \geq 1} \sum_{X = (\contour_1, \ldots, \contour_m)} 
    \ld \sum_{i=1}^m \mathbb{1}_{\text{$|\contour_i \cap P|$ is odd}} \rd \cdot
    \left| \Psi_-(X) - \Psi_+(X) \right| \nn
    &\leq 2 \cdot \sum_{m \geq 1} \sum_{X = (\contour_1, \ldots, \contour_m)} 
    \ld \sum_{i=1}^m \mathbb{1}_{\text{$|\contour_i \cap P|$ is odd}} \rd \cdot
    \left| \Psi_+(X) \right| \nn
    &\coloneqq 2 \cdot \Phi_P,
\end{align}
where
\begin{align}
    \label{eq:phi_P_convergence}
    \Phi_P &= \sum_{m \geq 1} \sum_{X = (\contour_1, \ldots, \contour_m)} 
    \ld \sum_{i=1}^m \mathbb{1}_{\text{$|\contour_i \cap P|$ is odd}} \rd \cdot
    \left| \Psi_+(X) \right| \nn
    &\leq \sum_{m \geq 1} m 
    \sum_{\contour_1} 
    \mathbb{1}_{\text{$|\contour_1 \cap P|$ is odd}}
    \sum_{\contour_2} \cdots \sum_{\contour_m}
    \left|\varphi(\contour_1, \ldots, \contour_m) \right| \cdot \prod_{i=1}^{m} \left|w_+(\contour_i)\right| \nn
    &\leq 
    \sum_{\contour_1} 
    \mathbb{1}_{\text{$|\contour_1 \cap P|$ is odd}} \cdot \left|w_+(\contour_1)\right| \cdot
    \ld
    1 + 
    \sum_{m \geq 2} m 
    \sum_{\contour_2} \cdots \sum_{\contour_m}
    \left|\varphi(\contour_1, \ldots, \contour_m) \right| \cdot \prod_{i=2}^{m} \left|w_+(\contour_i)\right|
    \rd \nn
    &\leq 
    \sum_{\contour_1} 
    \mathbb{1}_{\text{$|\contour_1 \cap P|$ is odd}}
    \cdot \left|w_+(\contour_1)\right| \cdot
    e^{|\contour_1|} \nn
    &\leq 
    \sum_{\contour_1} 
    \mathbb{1}_{\text{$|\contour_1 \cap P|$ is odd}}
    \cdot (e \cdot \tanh \beta)^{|\contour_1|}.
\end{align}
Since $\ln \mathcal{Z} \leq 0$, we have
\begin{align}
    \label{eq:bound_intermediate}
    e^{-2 \Phi_P} \leq \mathcal{Z} \leq 1.
\end{align}
Below, we upper bound $\Phi_P$ for the two cases in \cref{eq:z_two_cases_def}.

\begin{figure*}[b]
    \centering
    \includegraphics[width=.42\linewidth]{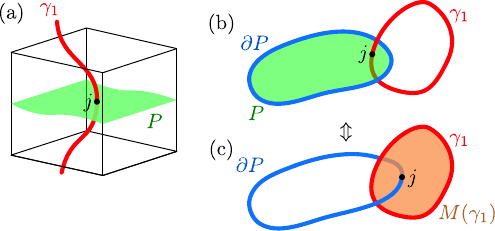}
    \caption{Terms contributing to the cluster expansion in (a) and \cref{eq:B19} (c,d) \cref{eq:B21}.
    }
    \label{fig:cluster_exp_config}
\end{figure*}

\subsection{When $P$ is a logical membrane -- for \cref{eq:Ising_apbc_F_upper_bound} of \cref{lemma:bottle_symmetry}}

Notice that on our lattice, for a logical membrane $P$,
\begin{align}
    \label{eq:B19}
    \mathbb{1}_{\text{$|\contour_1 \cap P|$ is odd}} \leq \mathbb{1}_{|\contour_1| \geq L} \cdot \sum_{j \in P} \mathbb{1}_{\contour_1 \ni j}.
\end{align}
For all $\beta$ satisfying the convergence conditions \cref{eq:condition_for_convergence,eq:condition_for_convergence_check}, we have from \cref{eq:phi_P_convergence}
\begin{align}
    \Phi_P & \leq \sum_{\contour_1} (e \cdot \tanh \beta)^{|\contour_1|} \cdot \mathbb{1}_{|\contour_1| \geq L} \cdot \sum_{j \in P} \mathbb{1}_{\contour_1 \ni j} \nn
    & \leq |P| \sum_{\contour_1} (e \cdot \tanh \beta)^{|\contour_1|} \cdot \mathbb{1}_{|\contour_1| \geq L} \cdot \mathbb{1}_{\contour_1 \ni 0} \nn
    & \leq (L^2) \sum_{\ell \geq L} \#\{\contour \ni 0: |\contour| = \ell \} \cdot (e \cdot \tanh \beta)^{\ell} \nn
    & \leq (L^2) \sum_{\ell \geq L} ( (z-1) \cdot e \cdot \tanh \beta)^{\ell} \nn
    & \leq e^{-c L}
\end{align}
with $c > 0$.
Comparing \cref{eq:z_two_cases_def,eq:bound_intermediate}, we have $\avg{\state_{\rm aux}}^{\rm I}_{x, \emptyset} \leq e^{-cL}$, and \cref{eq:Ising_apbc_F_upper_bound} holds for sufficiently large $L$.

\subsection{When $P$ is a membrane with boundary -- for \cref{eq:wilson_loop_upper_bound} of \cref{lemma:wilson_loop_perimeter_lower_bound}}

For each $\contour_1 \in \contourset$, we choose a membrane $M = M(\contour_1)$ such that $\partial M = \contour_1$ and $|M| \leq |\contour_1|^2$.
We observe that
\begin{align}
    \label{eq:B21}
    \mathbb{1}_{\text{$|\contour_1 \cap P|$ is odd}} = \mathbb{1}_{\text{$|\partial P \cap M(\contour_1)|$ is odd}} \leq \sum_{j \in \partial P} \mathbb{1}_{M(\contour_1) \ni j}.
\end{align}
For all $\beta$ satisfying the convergence conditions \cref{eq:condition_for_convergence,eq:condition_for_convergence_check}, we have from \cref{eq:phi_P_convergence}
\begin{align}
    \Phi_P & \leq \sum_{\contour_1} (e \cdot \tanh \beta)^{|\contour_1|} \sum_{j \in \partial P} \mathbb{1}_{M(\contour_1) \ni j} \nn
    & \leq \sum_{j \in \partial P} \sum_{\ell \geq 4}
    \#\{\contour: M(\contour) \ni j, |\contour| = \ell \} \cdot (e \cdot \tanh \beta)^{\ell} \nn
    & \leq \sum_{j \in \partial P} \sum_{\ell \geq 4}
    \#\{\contour: \contour \ni 0, |\contour| = \ell \} \cdot \ell^2 \cdot (e \cdot \tanh \beta)^{\ell} \nn
    & \leq |\partial P| \sum_{\ell \geq 4} \ell^2 \cdot ( (z-1) \cdot e \cdot \tanh \beta)^{\ell} \nn
    & \leq C |\partial P|,
\end{align}
with $C < \infty$.
Comparing \cref{eq:z_two_cases_def,eq:bound_intermediate}, we see that $\avg{W(\partial P)}_{x}^{\rm G} \geq \exp{-C |\partial P|}$, which is what we need in \cref{eq:wilson_loop_upper_bound}.

\end{widetext}

\end{document}